\newcommand{\mE}{\ensuremath{\mathbb{E}}}
\newcommand{\mF}{\ensuremath{\mathbf{F}}}
\newcommand{\mP}{\ensuremath{\mathbb{P}}}
\newcommand\gC{{\mathcal{C}}}
\newcommand\gE{{\mathcal{E}}}
\newcommand\gL{{\mathcal{L}}}
\newcommand\gX{{\mathcal{X}}}
\newcommand\gY{{\mathcal{Y}}}
\theoremstyle{plain}
\newtheorem{theorem}{Theorem}
\newtheorem{lemma}{Lemma}
\theoremstyle{definition}
\newtheorem{definition}{Definition}
\theoremstyle{plain}
\theoremstyle{remark}
\newtheorem{remark}{Remark}
\Crefname{equation}{}{}
\crefname{equation}{}{}
\def\BibTeX{{\rm B\kern-.05em{\sc i\kern-.025em b}\kern-.08em
    T\kern-.1667em\lower.7ex\hbox{E}\kern-.125emX}}
\newcommand{\ie}{\textit{i.e., }}
\newcommand{\eg}{\textit{e.g., }}
\newcommand{\R}{\mathbb R}    
\newcommand\eqnumber{\addtocounter{equation}{1}\tag{\theequation}}
\DeclareMathOperator*{\argmax}{arg\,max}
\DeclareMathOperator*{\argmin}{arg\,min}
\newcommand{\minf}{\Phi}
\newcommand{\sprob}{\mF}
\newcommand{\dimw}{\xi}
\newcommand{\Hess}[1]{\operatorname{Hess}{#1}}
\newcommand{\tr}{\operatorname{tr}}
\newcommand{\addmodfunc}{\kappa}
\newcommand{\varmargin}{L}
\newcommand{\rev}[1]{#1}            
\begin{document}
\title{Myopically Verifiable Probabilistic Certificates for Safe Control and Learning}
\author{Zhuoyuan Wang*, Haoming Jing*, Christian Kurniawan, Albert Chern, Yorie Nakahira
\thanks{This work is sponsored in part by the National Science Foundation under grant number 2442948, in part by the Japan Science and Technology Agency under grant number JPMJPR2136, and in part by the Office of Naval Research under grant number N00014-23-1-2252.
The views expressed are those of the authors and do not reflect the official policy or position of the US Navy, Department of Defense or the US Government.}
\thanks{*These authors contributed equally.}
\thanks{Zhuoyuan Wang, Haoming Jing and Yorie Nakahira are with the Department of Electrical and Computer Engineering, Carnegie Mellon University, PA 15213 USA  (e-mail: zhuoyuaw, haomingj, ynakahir@andrew.cmu.edu).}
\thanks{Christian Kurniawan is with the Information Systems Technology and Design pillar, Singapore University of Technology and Design, Singapore 487372 (e-mail: christian.paryoto@gmail.com).}
\thanks{Albert Chern is with the Department of Computer Science and Engineering, University of California San Diego, CA 92093 USA (e-mail: alchern@ucsd.edu).}}

\maketitle

\begin{abstract}
This paper addresses the design of safety certificates for stochastic systems, with a focus on ensuring long-term safety through fast real-time control. In stochastic environments, set invariance-based methods that restrict the probability of risk events in infinitesimal time intervals may exhibit significant long-term risks due to cumulative uncertainties/risks. On the other hand, reachability-based approaches that account for the long-term future may require prohibitive computation in real-time decision making. To overcome this challenge involving stringent long-term safety vs. computation tradeoffs, we first introduce a novel technique termed `probabilistic invariance'. This technique characterizes the invariance conditions of the probability of interest. When the target probability is defined using long-term trajectories, this technique can be used to design myopic conditions/controllers with assured long-term safe probability. Then, we integrate this technique into safe control and learning. The proposed control methods efficiently assure long-term safety using neural networks or model predictive controllers with short outlook horizons. The proposed learning methods can be used to guarantee long-term safety during and after training. Finally, we demonstrate the performance of the proposed techniques in numerical simulations. Code is available at~\href{https://github.com/jacobwang925/MCLS}{https://github.com/jacobwang925/MCLS}.
\end{abstract}

\begin{IEEEkeywords}
Safety, stochastic systems, intelligent systems, uncertain systems. 
\end{IEEEkeywords}

\section{Introduction}
\label{sec:introduction}
\IEEEPARstart{A}{utonomous} control systems must make safe control decisions in real-time and in the presence of uncertainties. For example, autonomous driving cars need to produce specific pedal, brake and steering commands to ensure their safety on road while interacting with other road users.
Many techniques have been developed for deterministic systems with bounded uncertainties that can assure long-term safety with efficient computation~\cite{blanchini1999set, khalil2002nonlinear, ames2019control}. These works often leverage set-invariance-based approaches~\cite{nagumo1942lage,bony1969principe,brezis1970characterization} to transform long-term design specifications into sufficient myopic conditions. 
However, when a system has uncertainty with unbounded support (\eg Gaussian noise), imposing set invariance on the state space for short-term future can no longer guarantee long-term safety. This is because even if the set invariance is satisfied with high probability at each time, uncertainty and risk may still accumulate over time. 

Consequently, there are stringent trade-offs between ensuring safety for longer time horizons and reducing computation burden. \rev{For example, robust and stochastic barrier functions can be efficiently evaluated~\cite{clark2021control,9561894}, but are insufficient to ensure long-term guarantees~\cite{so2025comment}.} 
\rev{On the other hand, conditions to control the probability of continuously satisfying set invariance conditions for a long period of time cannot be converted into myopic conditions~\cite{luo2019multi}.} Similarly, reachability-based approaches (e.g., probabilistic reachability) often come with heavy computation to propagate uncertainties over time~\cite{abate2008probabilistic,chapman2019risk,kariotoglou2013approximate,abate2006probabilistic,liao2022probabilistic,vasileva2020probabilistic}. 
These drawbacks pose significant challenges to control and learning methods for stochastic systems operating under latency-critical environments. 

To address such challenges, in this paper, we propose a novel technique, termed \textit{probabilistic invariance}. 
This technique is inspired by set invariance, but it defines invariance on the probability space, rather than on the state space.
This technique inherits the computational efficiency of the invariance-based approaches and the long-term guarantees of the reachability-based approaches. An infinitesimal future trajectory of a stochastic dynamical system provides little information about its future invariant sets on the state space. In comparison, our key insight is that its probabilistic analog, namely an infinitesimal future value of a long-term probability, does provide explicit computable information about invariant sets in the space of probability values. 

Building upon this insight, we first characterize myopic conditions that can ensure conditions on long-term probability at all times.
Specifically, the conditions can be used to constrain two types of long-term probabilities to stay within a desirable range. 
The first type is the probability of forward invariance, which can be used to represent the probabilities of exiting certain sets (e.g., safe regions) within a time interval.
The second type is the probability of forward convergence, which can be used to represent the probabilities of reaching or recovering to certain sets. 
For affine control systems, these conditions can be converted into linear control constraints, and thus can be easily integrated into convex/quadratic programs. 
Then, we apply the probabilistic invariance technique to safe control and learning problems. 
In the context of control, the proposed technique equips the nominal controllers, which are allowed to be neural networks and black-box controllers, with long-term safety. 
In the context of learning, the proposed technique minimally changes the training procedures while ensuring the long-term safety of the control policies during and after training. The contributions of this paper are summarized below: 
\begin{itemize}
    \item We propose a novel technique termed probabilistic invariance, which can be used to find myopic conditions with assured long-term guarantees (Theorem~\ref{lm:main_lemma}).
    \item Building upon these conditions, we present probabilistic safety certificate to be used in real-time control (Section~\ref{S:Proposed_Method}) and test its performance in simulated environments (Section~\ref{sec:experiment_for_control}).
    \item We show how to integrate the probabilistic certificate with policy gradient and Q-learning (Section~\ref{sec:safe_pg_new}) and demonstrate their efficacy in simulation (Section~\ref{sec:experiment_for_rl}).
\end{itemize}

\rev{The rest of the paper is organized as follows. We present related work in Section~\ref{sec:related_work} and preliminaries in Section~\ref{S:Preliminary}. We then present the problem formulation in Section~\ref{S:Problem_Statement}. We introduce the proposed probabilistic safety certificate in Section~\ref{S:Proposed_Method}, where we establish theoretical guarantees for long-term safety and provide practical computation methods for implementation. Afterwards, we present how the proposed safety certificate can be integrated with reinforcement learning in Section~\ref{sec:safe_pg_new}. We present experiment results in Sections~\ref{sec:experiment_for_control} and~\ref{sec:experiment_for_rl} and finally we conclude the paper in Section~\ref{sec:conclusion}.}

\section{Related Work}
\label{sec:related_work}

\subsection*{Safety Certificate for Stochastic Systems}

For deterministic systems with bounded uncertainties, previous work has developed various techniques for safe control (see review articles such as~\cite{ames2019control,blanchini1999set,bansal2017hamilton}), and the combination of set invariance and reachability-based techniques has been studied in~\cite{choi2021robust}. 
As this paper considers stochastic systems, our review below focuses on existing approaches for stochastic systems. 
We approximately classify these approaches into three main types based on their choice of tradeoffs: long-term safety with heavy computation; myopic safety with low computation; and long-term conservative safety with low computation.

Extensive literature considers the distributions of the system state in the long term. 
Reachability-based techniques characterize the probability that states reach or avoid certain regions over a long period of time~\cite{abate2008probabilistic,chapman2019risk,kariotoglou2013approximate,abate2006probabilistic,liao2022probabilistic,vasileva2020probabilistic, vzikelic2023learning}, which are useful to evaluate the long-term safety of actions. 
However, these techniques often come with significant computational costs. 
The cause is two-fold. First, possible trajectories often scale exponentially with the length of the outlook time horizon. Second, rare events are more costly to sample and estimate than nominal events. Compared to these techniques, the proposed method only myopically evaluates the immediate future, and the safety constraints can be given in closed forms for affine control systems. These features help reduce real-time computation.

On the other hand, extensive work has been done to develop efficient controllers in latency-critical stochastic systems. Given the hard tradeoffs between time horizon vs. computation, many of these techniques impose set invariance on short-term or infinitesimal future states. For example, stochastic control barrier functions use a sufficient condition to ensure that the state moves within the tangent cone of the safe set on average~\cite{clark2019control,wang2021safety, santoyo2021barrier, nishimura2024control}. The probabilistic barrier certificate ensures that conditions based on control barrier functions are met with high probability~\cite{luo2019multi,9561894,singletary2022safe, samuelson2018safety}. The myopic nature of these methods achieves a significant reduction in computational cost. However, they can result in unsafe behaviors on a longer time horizon due to the accumulation of tail probabilities of hazardous events. In contrast, the proposed technique provides long-term safe probability guarantees by finding probabilistic invariance conditions that directly control the likelihood of accumulating tail events.

Other approaches, such as the barrier certificate, establish efficient methods to characterize safe action in a given time interval using conservative or approximate conditions. In these studies, probability bounds or martingale approximations are used to obtain sufficient conditions for long-term safety~\cite{prajna2007framework,yaghoubi2020risk,santoyo2021barrier,huang2017probabilistic,anand2019verification}. 
Many of such conditions can be integrated into convex optimization problems to synthesize safe controllers offline or verify control actions online. The controllers using these conditions often require less computation to design or execute.   
However, due to the approximate nature, control actions can be conservative and unnecessarily compromise performance. In contrast, the proposed techniques can use exact safe probability values (computed offline) and inform the probability of exposed risk in the event of infeasibility. 
These features allow control actions to be determined based on accurate probabilities without overconservatisms arising from overapproximation, and the aggressiveness of the performance to be systematically designed based on exposed risks.

\subsection*{Integration with Learning-based Techniques}

Here, we provide a brief summary of related work that integrates safety certificates (safety constraints) with control and reinforcement learning.

The safety certificates mentioned above (\eg \cite{ames2019control,clark2019control,luo2019multi,9561894}) are commonly used to certify or modify nominal controllers. 
The nominal controllers are often allowed to be either black-box models or represented by neural networks~\cite{mazouz2022safety}. 
As stated above, in stochastic systems, when safety certificates are formulated using myopic conditions, they do not necessarily ensure long-term safety. 
When they are constructed based on approximations or designed to be safe in worst-case uncertainties, the system performance may degrade significantly as the size of uncertainties increases. On the other hand, the proposed certificate can be used similarly to existing safety certificates but ensures long-term safety using myopic controllers and exhibits graceful degradation for increasing uncertainties.\footnote{For example, in its application to extreme driving, the former can be observed from \cite[Fig.~2]{gangadhar2022adaptive}, and the latter from \cite[Fig.~6]{gangadhar2022adaptive}.}
Alternatively, safety considerations can be captured in rewards and constraints in model predictive control (MPC)~\cite{Farina2016,Hewing2020,brudigam2021stochastic}. When safety specifications for future states are directly expressed in the reward and constraint functions, the MPC outlook time horizon needs to cover such future states. 
In comparison, the proposed technique can be used to construct MPC constraints in a way that ensures safety for a longer period than the MPC outlook horizon.

There exists extensive literature on safe reinforcement learning, as reviewed in~\cite{gu2022review, garcia2015comprehensive}. 
Some approaches account for safety specifications in rewards or constraints~\cite{xu2021crpo, chen2021primal, liu2021policy, wachi2020safe, chow2018lyapunov} or design/learn certain functions related to safe conditions~\cite{qin2022neural,wei2022safe,zhao2021learning,tasse2023rosarl,amani2021safe}. Examples of these functions are control barrier functions~\cite{qin2022neural}, safety index~\cite{wei2022safe}, barrier certificate~\cite{zhao2021learning}, rewards and constraints~\cite{tasse2023rosarl,amani2021safe}. 
These methods aim to learn optimal and safe policies but may not guarantee safety in the initial learning phase.
Other approaches impose additional structures on control policies~\cite{donti2020enforcing,srinivasan2020learning,alshiekh2018safe,li2020robust, emam2022safe}. 
For example, safety filters (\eg Lyapunov conditions, control barrier functions, typically defined by prior knowledge of system models) are attached to learning-based controllers, and can often ensure safety or stability for deterministic systems or bounded uncertainties. Similarly, the proposed technique can be used to construct a safety filter for learning-based controllers. When used as a safety filter, it can control, both during and after training, the probability of forward invariance for a given duration or forward convergence within a given horizon for stochastic nonlinear systems.

\section{Preliminary}
\label{S:Preliminary}

Let $\R$, $\R_+$, $\R^n$, and $\R^{m\times n}$ be the set of real numbers, the set of non-negative real numbers, the set of $n$-dimensional real vectors, and the set of $m \times n$ real matrices, respectively. Let $x[k]$ be the $k$-th element of vector $x$. Let $X_{a:b}$ denote the set $\{X_i:i\in\{a,a+1,\cdots,b\}\}$, where $a,b\in\mathbb{Z}$ and $a\leq b$. Let $f:\gX \rightarrow \gY$ represent that $f$ is a mapping from space $\gX$ to space $\gY$. Let \(\mathbb{1}(\gE)\) be an indicator function, which takes \(1\) when condition \(\gE\) holds and \(0\) otherwise. Let $\mathbf{0}_{m\times n}$ be an $m\times n$ matrix with all entries $0$. Given events $\gE$ and $\gE_c$, let $\mP(\gE)$ be the probability of $\gE$ and $\mP(\gE | \gE_c)$ be the conditional probability of $\gE$ given the occurrence of $\gE_c$. Given random variables $X$ and $Y$, let $\mE[X]$ be the expectation of $X$ and $\mE[X | Y = y]$ be the conditional expectation of $X$ given $Y=y$. We use upper-case letters (\eg $Y$) to denote random variables and lower-case letters (\eg $y$) to denote their specific realizations.

\begin{definition}[Infinitesimal Generator]
    The infinitesimal generator $A$ of a stochastic process $\{Y_t\in\R^n\}_{t\in\R_+}$ is
    \begin{align}
    \label{eq:afy}
        AF(y)=\lim_{h\to 0}\frac{\mE[F(Y_h)| Y_0 = y]-F(y)}{h}
    \end{align}
    whose domain is the set of all functions $F:\R^n\rightarrow\R$ such that the limit of \eqref{eq:afy} exists for all $y\in\R^n$.
\end{definition}

\section{Problem Statement}
\label{S:Problem_Statement}

Here, we introduce the control system in subsection~\ref{sec:SystemDescription}, 
define the safety specification in subsection~\ref{SS:Safety_Specification}, and state the controller design goals in subsection~\ref{SS:Design_Goal}. 

\subsection{Control System Description} 
\label{sec:SystemDescription}
 
We consider a time-invariant control-affine stochastic control and dynamical system. The system dynamics is given by the stochastic differential equation (SDE)
\begin{align}
\label{eq:x_trajectory}
    dX_t = \left(f(X_t) + g(X_t)U_t\right) dt + \sigma(X_t) dW_t,
\end{align}
where $X_t \in \mathcal{X} \subseteq \R^n$ is the system state, $U_t \in \mathcal{U} \subseteq \R^m$ is the control input, and $W_t \in \R^\dimw$ captures the system uncertainties. Here, $X_t$ can include both the controllable states of the system and the uncontrollable environmental variables such as moving obstacles. We assume that $W_t$ is the standard Wiener process with $0$ initial value, \ie $W_0=0$. The value of $\sigma(X_t)$ is determined based on the size of uncertainty in unmodeled dynamics, environmental variables and noise. 

The control action $U_t$ is determined at each time by the control policy. We assume that accurate information of the system state can be used by the control policy. The control policy is composed of a nominal controller and additional modification scheme to ensure the safety specifications illustrated in subsection~\ref{SS:Safety_Specification}. The nominal controller is represented by 
\begin{align}
\label{eq:nominal_controller}
    U_t = N(X_t), 
\end{align}
where function $N$ lies in a class of stochastic or deterministic functions mapping $\mathcal{X}$ to $\mathcal{U}$, \ie $N \in \mathcal N_s\cup\mathcal N_d$. Here, $\mathcal N_s$ denotes the class of stochastic functions, $\mathcal N_d$ denotes the class of deterministic functions, and we use $\mathcal N$ to denote the function class when it can be either.
The design of $N$ does not necessarily account for the safety specifications defined below and can be represented by neural networks. To adhere to the safety specifications, the output of the nominal controller is then modified by another scheme. The overall control policy involving the nominal controller and the modification scheme is represented by 
\begin{align}
\label{eq:generic_controller}
    U_t = N'(X_t, L_t, T_t),
\end{align}
where $N': \R^{n}\times\R\times\R \rightarrow \R^m$ is a mapping from the current state $X_t$, safety margin $L_t$, and time horizon $T_t$ to the current control action $U_t$.  
The formal definition of $L_t$ and $T_t$ will be given later in the section.
The policy of the form~\eqref{eq:generic_controller} assumes that the decision rule is time-invariant.\footnote{\label{ft:timeinvariant-control}The functions $N$ and $N'$ do not change over time.} 
This policy is also assumed to be memory-less in the sense that it does not use the past history of the state $\{X_\tau\}_{\tau < t}$ to produce the control action $U_t$. The assumption for memory-less controller is reasonable because the state evolution $dX_t$ of system \eqref{eq:x_trajectory} only depends on the current system state $X_t$ as $f(X_t)$, $g(X_t)$, and $\sigma(X_t)$ are time-invariant functions of the system state. 
We restrict ourselves to the settings where $f, g,\sigma, N, N’$ have sufficient regularity conditions such that the closed-loop system has a well-defined solution. For example, the system is well-defined when $f, g,\sigma, N, N’$ are bounded and globally Lipschitz~\cite{oksendal2013stochastic}, which are common assumptions usually satisfied in practice. More generally, conditions required to have a unique solution can be found in~\cite[Chapter~1]{oksendal_stochastic_2003a}, \cite[Chapter II.7]{borodin_stochastic_2017} and references therein. 


The safe region of the state is specified by the zero super level set of a continuously differentiable barrier function $\phi(x) : \R^n \rightarrow \R$, \ie 
\begin{align}
    \gC(0) = \left\{x \in \R^n : \phi(x) \geq 0 \right\}.
\end{align}
We use 
\begin{align}
\label{eq:l-level_set}
    \gC(L) := \left\{x \in \mathbb{R}^{n}: \phi(x) \geq L \right\}
\end{align}
to denote the $L$-super level set of $\phi$ (the set with safety margin $L$). Accordingly, we use $\gC(L)^c = \{x\in\R^{n}: \phi(x)<L \}$ to denote the unsafe set (complement of the safe set).



\subsection{Safety Specifications}
\label{SS:Safety_Specification}

We consider two types of safety specifications: forward invariance and forward convergence. The two types are used to define the following four types of safety-related probabilities.

\subsubsection{Forward Invariance}
\label{SSS:Forward_Invariance}
Long-term safety and long-term avoidance are expressed in the forms of forward invariance conditions.
The forward invariance property refers to the system's ability to keep its state within a set when the state originated from the set. The probabilistic forward invariance to a set $\gC(L_t)$ can be quantified using 
\begin{align}
\label{eq:forward_invariance}
    \mP\left(\, X_\tau \in \gC(L_t), \forall \tau \in [t,t+T_t] \ | \ X_t=x \, \right)
\end{align}
for some time interval $[t,t+T_t]$ conditioned on an initial condition $x \in \gC(L)$.



\noindent \ul{Type 1: long-term safety} is defined using forward invariance assuming the continued use of the current (nominal) controller~\eqref{eq:nominal_controller} given $X_t \in \gC(L_t)$ at time $t$. Its probability is given by
\begin{equation}
\begin{aligned}
    \label{eq:long_term_safety}
    & P_1(x, L, T, \phi) := \mathbb{P}\{ X_{\tau} \in \gC(L), \forall \tau \in [t,t+T] \mid X_t = x \},
\end{aligned}
\end{equation}
where the probability is evaluated assuming $U_\tau=N(X_\tau),\forall \tau\in[t,t+T]$.

\noindent \ul{Type 2: long-term avoidance} is defined using forward invariance assuming the future control action can be modified given $X_t  \in \gC(L)$ at time $t$. The long-term avoidance is quantified by
\begin{equation}
\begin{aligned}
    \label{eq:long_term_avoidance}
    & P_2(x, L, T, \phi) := \sup_{N \in \mathcal{N}}\mathbb{P}\{X_{\tau} \in \gC(L), \\
    & \qquad \qquad \forall \tau \in [t,t+T] \mid X_t = x \},
\end{aligned}    
\end{equation}
where the probability is evaluated assuming $U_\tau=N(X_\tau),\forall \tau\in[t,t+T]$. The probability of long-term safety will be used to measure the likelihood of remaining safe under the current controller, and the probability of long-term avoidance will be used to measure the possibility to avoid unsafe events, independently from the current choice of the control policy.

\subsubsection{Forward Convergence}
\label{SSS:Safety_Recovery}
Finite-time eventuality and reachability are expressed in the forms of forward convergence conditions (\ie having the state converge to a set $\gC$ within $T_t$).
The forward convergence property indicates the system's capability for its state to enter a set when the state originated from outside the set. This probabilistic forward convergence can be quantified using
\begin{align}
\label{eq:forward_convergence}
    \mP\left(\, \exists \tau \in [t,t+T_t] \text{ s.t. } X_\tau \in \gC(L_t)\ |\ X_t = x \, \right)
\end{align}
for some time interval $[t,t+T_t]$ conditioned on an initial condition $x \in \gC(L_t)^c $.



\noindent \ul{Type 3: finite-time eventuality} is defined using forward convergence assuming continued use of the current (nominal) controller~\eqref{eq:nominal_controller} where it is given $X_t  \notin \gC(L)$ at time $t$. Its probability is given by
\begin{equation}
\begin{aligned}
    \label{eq:finite_time_eventuality}
    & P_3(x, L, T, \phi) := \mathbb{P} \{X_{\tau} \in \gC(L) \\
    & \quad   \text{ for some } \tau \in [t,t+T] \mid X_t = x \},
\end{aligned}    
\end{equation}
where the probability is evaluated assuming $U_\tau=N(X_\tau),\forall \tau\in[t,t+T]$.

\noindent \ul{Type 4: finite-time reachability} is defined using forward convergence assuming the future control action can be modified given $X_t  \notin \gC(L_t)$ at time $t$. The finite-time reachability is quantified by
\begin{equation}
\begin{aligned}
    \label{eq:finite_time_reachability}
    & P_4(x, L, T, \phi) := \sup_{N \in \mathcal{N}}\mathbb{P} \{X_{\tau} \in \gC(L)  \\
    & \quad \text{ for some } \tau \in [t,t+T] \mid X_t = x \},
\end{aligned}    
\end{equation}
where the probability is evaluated assuming $U_\tau=N(X_\tau),\forall \tau\in[t,t+T]$. The finite-time eventuality will be used to measure the likelihood of reaching the desired states under the current controller and the finite-time reachability will be used to measure of the system's capability to do so. 

Note that one can define the safety-related probabilities with regard to multiple safety requirements with different horizons of interest. Such generalization is straightforward thus is not explicitly discussed in the paper due to space limit.

\subsection{Design Goals}
\label{SS:Design_Goal}

The objective of this paper is to ensure either long-term safety/avoidance or finite-time eventuality/reachability. The objective is mathematically defined as follows.
For simplicity of notations, we define the following probability
\begin{equation}
\label{eq:safe event}
P(x, L, T, \phi) := 
\begin{cases}
P_1(x, L, T, \phi), \quad \text{for type 1} \\
P_2(x, L, T, \phi), \quad \text{for type 2} \\
P_3(x, L, T, \phi), \quad \text{for type 3} \\
P_4(x, L, T, \phi), \quad \text{for type 4} 
\end{cases}
\end{equation}
The long term safety condition we aim to ensure is defined as
\begin{equation}
\label{eq:safety_specification}
\mathbb{E}\left[P(X_t, L_t, T_t, \phi)\right] \geq 1-\epsilon, \quad \forall t >0
\end{equation}
where $\epsilon \in (0,1)$ is the pre-specified risk tolerance. 
The expectation~\eqref{eq:safety_specification} is taken over the distribution of $X_t$ and the future trajectories $\{X_\tau\}_{\tau \in (t, t+T_t]}$ conditioned on the initial distribution of $X_0$.
The distribution of $X_t$ is generated based on the closed-loop system of \eqref{eq:x_trajectory} and \eqref{eq:generic_controller}, 
whereas the distribution of $\{X_\tau\}_{\tau \in (t, t+ T_t]}$ is allowed to be defined in two different ways based on the design choice: the closed-loop system of \eqref{eq:x_trajectory} and \eqref{eq:nominal_controller} or the closed-loop system of \eqref{eq:x_trajectory} and \eqref{eq:generic_controller}. 

The physical meaning of the objective~\eqref{eq:safety_specification} is that we want the probability of long-term safety at each time step to be greater than a pre-specified threshold.  
For example, when~\eqref{eq:safe event} is defined for type 1, we have
\begin{equation}
    \mathbb{E}\left[P(X_t, L_t, T_t, \phi)\right] = \mP\left( X_\tau \in \gC(L_t), \forall \tau \in [t,t+T_t] \right).
\end{equation}

We consider either fixed time horizon problem or receding time horizon problem, following the definition in ~\cite{rawlings2017model}. In the receding time horizon problem, safety is evaluated at each time $t$ for a time interval $[t,t+H]$. In the fixed time horizon problem, we evaluate, at each time $t$, safety only for the remaining time $[t,H]$. The outlook time horizon for each case is given by
\begin{align}
\label{eq:time_horizon_cases}
    T_t & =
    \begin{cases}
        H, & \text{for receding time horizon,} \\
        H-t, & \text{for fixed time horizon.}
    \end{cases}
\end{align}
The safety margin is assumed to be either fixed or time varying. Fixed margin refers to when the margin remains constant at all time, \ie $\varmargin_t = \ell$. For time-varying margin, we consider the margin $\varmargin_t$ that evolves according to 
\begin{align}
\label{eq:f_ell_def}
    d\varmargin_t/dt = f_\ell (\varmargin_t),\ \varmargin_0 = \ell,
\end{align}
for some continuously differentiable function $f_\ell$. This representation includes fixed margin by setting $f_\ell (\varmargin_t) \equiv 0$. The values of $T_t $ and $\{\varmargin_t\}_{t \in \mathbb{R}^+}$ are determined based on the design choice.

\section{Probabilistic Safety Certificate}
\label{S:Proposed_Method}

Here, we present a sufficient condition to achieve the safety requirements in subsection \ref{SS:safety_conditions}. Then we provide efficient computation methods to calculate the safety condition in subsection~\ref{SS:safe_prob_computation}. Finally, we propose two safe control strategies and conclude the overall framework in subsection~\ref{SS:Proposed_Algorithm}.

Before presenting these results, we first define a few notations. To capture the time-varying nature of $T_t \text{ and } \varmargin_t$, we augment the state space as
\begin{align}
\label{eq:Z_definition}
    Z_t:=[T_t,\varmargin_t,\phi(X_t),X_t^\top]^\top\in\R^{n+3}.
\end{align}
The dynamics of $Z_t$ satisfies the following SDE:
\begin{align}
\label{eq:z_t_dyn}
    dZ_t = (\tilde{f}(Z_t)+\tilde{g}(Z_t)U_t) dt+\tilde{\sigma}(Z_t)dW_t.
\end{align}
Here, $\tilde{f}$, $\tilde{g}$, and $\tilde{\sigma}$ are defined to be
\begin{align}
\label{eq:tilde_f}
    \tilde{f}(Z_t) & :=[f_T,f_\ell (\varmargin_t),f_\phi (X_t),f(X_t)^\top]^\top \in \R^{(n+3)}, \\
    \tilde{g}(Z_t) & :=[\mathbf{0}_{m\times 2},(\gL_{g}\phi(X_t))^\top,g(X_t)^\top]^\top \in \R^{(n+3) \times m}, \\
    \tilde{\sigma}(Z_t) & :=[\mathbf{0}_{\xi\times 2},(\gL_\sigma \phi(X_t))^\top,\sigma(X_t)^\top]^\top \in \R^{(n+3) \times \dimw}.
\end{align}
In \eqref{eq:tilde_f}, the scalar $f_T$ is given by
\begin{align}
\label{eq:f_T_def}
f_T&:=\begin{cases}
        0, & \text{in receding time horizon,} \\
        -1, & \text{in fixed time horizon,}
    \end{cases}
\end{align}
the function $f_\ell$ is given by \eqref{eq:f_ell_def}, and the function $f_\phi$ is given by 
\begin{align}
\label{eq:f_phi_def}
f_\phi (X_t)&:= \gL_{f}\phi(X_t) +\frac{1}{2} \text{tr} \big(\left[\sigma(X_t)\right]\left[\sigma(X_t)\right]^\top\Hess \phi(X_t)\big).
\end{align}

\begin{remark}
\label{rm:lie-derivative}
The Lie derivative of a function $\phi(x)$ along the vector field $f(x)$ is denoted as $\gL_{f}\phi(x) = f(x) \cdot \nabla \phi(x) $. The Lie derivative $(\gL_{g} \phi(x))$ along a matrix field $g(x)$ is interpreted as a row vector such that $\left( \gL_{g}\phi(x) \right) u  = \left( g(x) u \right) \cdot \nabla \phi(x)$.
\end{remark}


\subsection{Conditions to Assure Safety}
\label{SS:safety_conditions}

We consider the following probabilistic quantity:\footnote{Recall from Section~\ref{SS:Safety_Specification} that whenever we take the probabilities (and expectations) over paths, we assume that the probabilities are conditioned on the initial condition \(X_0 = x\).}
\begin{align}
\label{eq:cases_summary}
    \sprob(Z_t) := P(X_t, L_t, T_t, \phi),
\end{align}
where the probability is taken over the same distributions of $\{X_\tau\}_{\tau \in [t, T_t]}$ that are used in the safety requirement~\eqref{eq:safety_specification}. The values of $T_t$ and $\varmargin_t$ (known and deterministic) are defined in \cref{eq:time_horizon_cases,eq:f_ell_def} depending on the design choice of receding/fixed time-horizon and fixed/varying margin. Additionally, we define the mapping $D_\sprob:\R^{n+3}\times\R^m \rightarrow \R$ as\footnote{See Remark \ref{rm:lie-derivative} for the notation for Lie derivative.}
\begin{align}
\label{eq:infgen_to_mf}
    \begin{split}
        D_\sprob (Z_t,U_t) 
        := & \ \gL_{\tilde{f}}\sprob(Z_t)+ \left( \gL_{\tilde{g}}\sprob(Z_t) \right) U_t \\ 
        &+\frac{1}{2}\text{tr} \left(\left[\tilde{\sigma}(Z_t)\right]\left[\tilde{\sigma}(Z_t)\right]^\top\Hess \sprob(Z_t)\right).
    \end{split}
\end{align}
From It\^o's Lemma,\footnote{It\^o's Lemma is stated as below: Given a $n$-dimensional real valued diffusion process $dX = \mu dt + \sigma dW$ and any twice differentiable scalar function $f: \R^n \rightarrow \R$, one has 
$
df= \left(\gL_\mu f + \frac{1}{2}\tr\left(\sigma\sigma^\top \Hess{f}\right)\right) dt + \gL_\mu \sigma dW.
$} the mapping \eqref{eq:infgen_to_mf} essentially evaluates the value of the infinitesimal generator of the stochastic process $Z_t$ acting on $\sprob$: \ie $A\sprob (Z_t) = D_\sprob (Z_t,U_t)$ when the control action $U_t$ is used.

We propose to constrain the control action $U_t$ to satisfy the following condition at all time $t$: 
\begin{align}
\label{eq:safety_condition_each_Zt}
	D_\sprob (Z_t,U_t) \geq -\alpha \left(\sprob(Z_t) - (1-\epsilon) \right).
\end{align}
Here, $\alpha: \R \rightarrow \R$ is assumed to be a monotonically-increasing, concave or linear function that satisfies $\alpha(0) \leq 0$. From \eqref{eq:infgen_to_mf}, condition \eqref{eq:safety_condition_each_Zt} is affine in $U_t$. This property allows us to integrate condition \eqref{eq:safety_condition_each_Zt} into a convex/quadratic program. 
Note that~\eqref{eq:safety_condition_each_Zt} is a forward invariance condition on probability, while typical control barrier function (CBF) based methods perform forward invariance on state space.  
The advantage of imposing forward invariance on the probability space is that long-term safety can be ensured. In contrast, directly imposing forward invariance on the barrier function $\phi$ can not guarantee long-term safety (see Fig.~\ref{fig:worst_case} for comparison results). 
\begin{theorem}
\label{lm:main_lemma}
   Consider the closed-loop system of $\eqref{eq:x_trajectory}$ and \eqref{eq:generic_controller}.
   If system \eqref{eq:x_trajectory} originates at $X_0=x$ with $\sprob(z)>1-\epsilon$, and the control action satisfies \eqref{eq:safety_condition_each_Zt} at all time, then the following condition holds for all time $t \in \R_+$:\footnote{Here, the expectation is taken over $X_t$ conditioned on $X_0 = x$, and $\sprob$ in \cref{eq:cases_summary} gives the probability of forward invariance/convergence of the future trajectories $\{X_\tau\}_{(t , t+T_t]}$ starting at $X_t$.}
	\begin{align}
	\label{eq:satisfy_control_policy}
		\mE\left[\sprob(Z_t) \right] \geq 1 - \epsilon.
	\end{align}
\end{theorem}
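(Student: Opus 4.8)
The plan is to reduce the claim to a one-dimensional differential inequality for the deterministic scalar function $m(t) := \mE[\sprob(Z_t)]$ and then close it with a comparison (barrier) argument. The starting point is the identity noted just below \eqref{eq:infgen_to_mf}, namely $A\sprob(Z_t) = D_\sprob(Z_t,U_t)$, obtained from It\^o's lemma applied to $\sprob$ along the augmented process $Z_t$. Combining this with the defining property of the infinitesimal generator \eqref{eq:afy} (equivalently, Dynkin's formula), I would first establish, under the stated regularity and the differentiability of $\mE[\sprob(Z_t)]$ in $t$, the chain of equalities $\frac{d}{dt}\mE[\sprob(Z_t)] = \mE\left[A\sprob(Z_t)\right] = \mE\left[D_\sprob(Z_t,U_t)\right]$, where the expectation is over $X_t$ conditioned on $X_0=x$.

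Next I would substitute the control constraint \eqref{eq:safety_condition_each_Zt}, which holds pathwise at every time, inside the expectation to obtain $\dot m(t) = \mE\left[D_\sprob(Z_t,U_t)\right] \geq -\,\mE\left[\alpha\!\left(\sprob(Z_t)-(1-\epsilon)\right)\right]$. Here the concavity of $\alpha$ becomes essential: by Jensen's inequality, $\mE\left[\alpha\!\left(\sprob(Z_t)-(1-\epsilon)\right)\right] \leq \alpha\!\left(\mE[\sprob(Z_t)]-(1-\epsilon)\right) = \alpha\!\left(m(t)-(1-\epsilon)\right)$, which yields the closed scalar differential inequality $\dot m(t) \geq -\alpha\!\left(m(t)-(1-\epsilon)\right)$ with initial value $m(0)=\sprob(z) > 1-\epsilon$ (since $Z_0$ is deterministic when $X_0=x$).

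Finally I would run the barrier/comparison step. Writing $y(t) := m(t)-(1-\epsilon)$, the inequality reads $\dot y \geq -\alpha(y)$ with $y(0)>0$, and the goal is $y(t)\geq 0$ for all $t$. The level $y=0$ acts as a lower barrier: because $\alpha$ is monotonically increasing with $\alpha(0)\leq 0$, the comparison system $\dot w = -\alpha(w)$ satisfies $\dot w\big|_{w=0} = -\alpha(0)\geq 0$, so its solution started at $w(0)=y(0)>0$ cannot downcross $0$; invoking the standard comparison lemma (using that the concave $\alpha$ is locally Lipschitz on the interior of its domain, so the comparison ODE has a unique solution) gives $y(t)\geq w(t)\geq 0$, i.e. \eqref{eq:satisfy_control_policy}. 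The main obstacle I expect is rigorously justifying the first step, the interchange $\frac{d}{dt}\mE[\sprob(Z_t)] = \mE\left[A\sprob(Z_t)\right]$: this relies on Dynkin's formula and differentiation under the expectation, and thus on the well-posedness of the closed loop and enough smoothness of $\sprob$ to apply It\^o's lemma (the presence of $\Hess\sprob$ in \eqref{eq:infgen_to_mf} implicitly requires second-order regularity, beyond the bare continuous differentiability named in the hypotheses). A secondary technical point is ensuring the comparison lemma applies for a possibly nonlinear concave $\alpha$ rather than only the linear case.
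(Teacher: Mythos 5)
Your proposal is correct and follows the same skeleton as the paper's proof: Dynkin's formula to get $\frac{d}{dt}\mE[\sprob(Z_t)]=\mE[A\sprob(Z_t)]$, the pathwise constraint \eqref{eq:safety_condition_each_Zt} inside the expectation, concavity of $\alpha$ via Jensen, and a barrier argument at the level $1-\epsilon$. The two places you differ are both in packaging rather than substance. First, you apply Jensen in one shot, $\mE[\alpha(\sprob(Z_t)-(1-\epsilon))]\le\alpha(\mE[\sprob(Z_t)]-(1-\epsilon))$, whereas the paper conditions on the two events $\{\sprob(Z_\tau)<1-\epsilon\}$ and $\{\sprob(Z_\tau)\ge 1-\epsilon\}$, applies Jensen on each piece, and recombines using concavity and monotonicity; your version is shorter and proves the same implication. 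Second, for the endgame the paper only needs the weak statement ``$\frac{d}{d\tau}\mE[\sprob(Z_\tau)]\ge 0$ whenever $\mE[\sprob(Z_\tau)]\le 1-\epsilon$'' (which follows from your scalar inequality by monotonicity of $\alpha$ and $\alpha(0)\le 0$) and closes it with an elementary intermediate-value/mean-value contradiction (its Lemma~\ref{lm:lm2}); you instead invoke a full ODE comparison lemma with $\dot w=-\alpha(w)$, which works but carries the extra Lipschitz/uniqueness baggage you flag — the paper's route avoids that entirely. Your concern about the regularity needed to justify $\frac{d}{dt}\mE[\sprob(Z_t)]=\mE[A\sprob(Z_t)]$ (second derivatives of $\sprob$ appear in \eqref{eq:infgen_to_mf} while the hypothesis only states $C^1$) applies equally to the paper's proof, which likewise invokes Dynkin's formula without further justification.
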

\begin{proof}[Proof (\cref{lm:main_lemma})]
First, we show that
\begin{align}
\label{eq:expectation_less_than}
    \mE[\sprob(Z_\tau)] \leq 1-\epsilon
\end{align}
implies
\begin{align}
\label{eq:expectation_morethan_0}
    \mE \left[\alpha\left(\sprob(Z_\tau) - (1-\epsilon) \right) \right] \leq 0,
\end{align}
where we let $\tau$ be the time when \eqref{eq:expectation_less_than} holds. We first define the 
events $D_i$ and a few variables $v_i,q_i, \text{ and } \delta_i$, $i\in\{0,1\}$, as follows:
\begin{align}
    \label{eq:D0}
    D_0 & = \left\{\sprob(Z_\tau) < 1-\epsilon \right\},\quad
    D_1 = \left\{\sprob(Z_\tau) \geq 1-\epsilon \right\}, \\
    \label{eq:V0}
    v_0 & = \mE \left[\sprob(Z_\tau) \mid D_0 \right] = 1-\epsilon-\delta_0, \\
    \label{eq:V1}
    v_1 & = \mE \left[\sprob(Z_\tau) \mid D_1 \right] = 1-\epsilon+\delta_1, \\
    \label{eq:P0}
    q_0 & = \mP(D_0), \quad
    q_1 = \mP(D_1).
\end{align}
The left hand side of \cref{eq:expectation_less_than} can then be written as
\begin{align*}
    \mE[\sprob(Z_\tau)] & = \mE \left[\sprob(Z_\tau) \mid D_0 \right] \mP(D_0) +\mE \left[\sprob(Z_\tau) \mid D_1 \right] \mP(D_1) \\
    \label{eq:expectation_in_VP}
    & = v_0q_0 + v_1q_1. \eqnumber
\end{align*}
From
\begin{align}
\label{eq:fact1_from}
    \mE \left[\sprob(Z_\tau) \mid D_0 \right] & < 1-\epsilon\ \text{and}\ \mE \left[\sprob(Z_\tau) \mid D_1 \right] & \geq 1-\epsilon,
\end{align}
we obtain
\begin{align}
\label{eq:fact1}
    \delta_0 \geq 0 \quad\text{and}\quad \delta_1 \geq 0.
\end{align}
Moreover, $\{q_i\}_{i\in\{0,1\}}$ satisfies
\begin{align}
\label{eq:fact2}
    \mP(D_0) + \mP(D_1) = q_0 + q_1 = 1.
\end{align}
Combining \cref{eq:expectation_less_than,eq:expectation_in_VP} gives
\begin{align}
\label{eq:combine_expectation}
    v_0q_0 + v_1q_1 \leq 1-\epsilon.
\end{align}
Applying \cref{eq:V0,eq:V1} to \cref{eq:combine_expectation} gives
\begin{align}
\label{eq:fact3_from}
    \left(1-\epsilon-\delta_0 \right)q_0 + \left(1-\epsilon+\delta_1 \right)q_1 \leq 1-\epsilon,
\end{align}
which, combined with \eqref{eq:fact2}, yields
\begin{align}
\label{eq:fact3}
    \delta_1 q_1 - \delta_0 q_0 \leq 0.
\end{align}
On the other hand, we have
\begin{align*}
    & \mE \left[\alpha\left(\sprob(Z_\tau) - (1-\epsilon) \right) \right] \\
    &\ \ \ \ \ = \ \mP(D_0) \left( \mE \left[\alpha\left(\sprob(Z_\tau) - (1-\epsilon) \right) \mid D_0 \right] \right) \\
    &\ \ \ \ \ \ \ \ + \mP(D_1) \left( \mE \left[\alpha\left(\sprob(Z_\tau) - (1-\epsilon) \right) \mid D_1 \right] \right) \eqnumber \\
    &\ \ \ \ \ = \ q_0 \left( \mE \left[\alpha\left(\sprob(Z_\tau) - (1-\epsilon) \right) \mid D_0 \right] \right) \\
    &\ \ \ \ \ \ \ \ + q_1 \left( \mE \left[\alpha\left(\sprob(Z_\tau) - (1-\epsilon) \right) \mid D_1 \right] \right) \label{eq:expectation_given_q} \eqnumber \\
    &\ \ \ \ \ \leq \ q_0 \left( \alpha\left(\mE \left[\sprob(Z_\tau) - (1-\epsilon) \mid D_0 \right] \right) \right) \\
    &\ \ \ \ \ \ \ \ + q_1 \left( \alpha\left(\mE \left[\sprob(Z_\tau) - (1-\epsilon) \mid D_1 \right] \right) \right) \label{eq:inequality_from_jensen_rule} \eqnumber \\
    &\ \ \ \ \ = \ q_0 \left( \alpha\left(-\delta_0 \right) \right) + q_1 \left( \alpha\left(\delta_1 \right) \right) \label{eq:jense_inequality_given_V0V1} \eqnumber \\
    &\ \ \ \ \ \leq \ \alpha \left(-q_0\delta_0 + q_1\delta_1 \right) \label{eq:jensen_inequality_given_V0V1_assume_A2} \eqnumber \\
    &\ \ \ \ \ \leq \ 0. \label{eq:expectation_lessthan_0} \eqnumber
\end{align*}
Here, \eqref{eq:expectation_given_q} is due to~\eqref{eq:P0};~\eqref{eq:inequality_from_jensen_rule} is obtained from Jensen's inequality~\cite{Jensen1906} for concave function $\alpha$;~\eqref{eq:jense_inequality_given_V0V1} is based on~\eqref{eq:V0} and~\eqref{eq:V1};~\eqref{eq:jensen_inequality_given_V0V1_assume_A2} is given by the assumptions on function $\alpha$; and~\eqref{eq:expectation_lessthan_0} is due to~\eqref{eq:fact3}. Thus, we showed that~\eqref{eq:expectation_less_than} implies~\eqref{eq:expectation_morethan_0}.

Using Dynkin's formula, given a time-invariant control policy, the sequence $\mE[\sprob(Z_t)]$ takes deterministic value over time where the dynamics is given by
\begin{align}
    \frac{d}{d\tau} \mE[\sprob(Z_\tau)] = \mE [A\sprob(Z_\tau)].
\end{align}
Condition~\eqref{eq:safety_condition_each_Zt} implies
\begin{align}
\label{eq:dif_E_bigger_than_E}
    \mE [A\sprob(Z_\tau)] \geq - \mE \left[\alpha\left(\sprob(Z_\tau) - \left(1-\epsilon\right)\right)\right].
\end{align}
Therefore, we have
\begin{align}
\label{eq:dif_E_bigger_than_0}
        \frac{d}{d\tau} \mE [\sprob(Z_\tau)] \geq 0 \quad\text{whenever \(\mE[\sprob(Z_\tau)] \leq 1 - \epsilon\)}.
\end{align}
This condition implies
\begin{align}
    \mE[\sprob(Z_t)] \geq 1-\epsilon  \quad\text{for all \(t\in\R_+\).}
\end{align}
due to Lemma~\ref{lm:lm2}, which is given below.
\end{proof}

\begin{lemma}
\label{lm:lm2} 
Let \(y\colon \R_+\to\R\) be a real-valued differentiable function that satisfies ${d\over dt}y_t \geq 0$ whenever $y_t\leq L$. If $y_0>L$, then $y_t\geq L$ for all $t\in\R_+$.
\end{lemma}
\begin{proof} [Proof (\cref{lm:lm2})]
    Suppose there exists \(b\in\R_+\) such that \(y_b<L\). By the intermediate value theorem, there exists \(a\in (0,b)\) such that \(y_a = L\), and \(y_t<L\) for all \(t\in(a,b]\). Next, by the mean value theorem, there exists \(\tau\in (a,b)\) such that \((dy_t/dt)|_{t=\tau} = (y_b-y_a)/(b-a) < 0\). This contradicts condition that ${d\over dt}y_t \geq 0$ whenever $y_t\leq L$.
\end{proof}

\rev{Theorem~\ref{lm:main_lemma} says that with the proposed safety certificate~\eqref{eq:safety_condition_each_Zt}, long-term safety can be guaranteed with high probability $1-\epsilon$ at all times.}

\subsection{Efficient Computation of Long-term Probabilities}
\label{SS:safe_prob_computation}

In this section, we provide computation methods to calculate the probability of the safety-related events.
While the proposed safe control technique works for both stochastic and deterministic control policies, we present here computation methods for the more commonly considered cases of deterministic policies. 
Before introducing computation methods for controller implementation, we first consider a discretized version of dynamical system \eqref{eq:x_trajectory} for sampling time $\Delta t$:
\begin{align}
    \label{eq:discrete_time_dynamics}
    X_{t+\Delta t}=F_d(X_t,U_t,W_t),
\end{align}
where $F_d$ is the discrete-time system dynamics. We also consider a discrete time horizon $T_d=\frac{T}{\Delta t}$.\footnote{Here, we assume that $T$ and $\Delta t$ are chosen such as that $T_d$ is an integer.} For simplicity, with a slight abuse of notation, we use $X_k$ to denote the system state evaluated at time $k\Delta t$ in this subsection.

\subsubsection{Long-term safety and finite time eventuality}
Importance sampling techniques provide efficient computation methods to calculate the probabilities when the nominal control is fixed~\cite{zhang2023optimal, thijssen2015path}.\footnote{For simplicity, we assume $g(X) = \sigma(X)$ as in the path integral formulation.} 
Assume $N \in \mathcal{N}_d$ is the nominal controller of interest, and $N_s \in \mathcal{N}_d$ is another controller that we have sampled data with. Let $\mathcal{P}_N$ denote the measure over the classical Wiener space $\Omega = C([0, T ]; \mathbb{R}^n)$ induced by the dynamics~\eqref{eq:discrete_time_dynamics} with $U_k = N(X_k)$, and $\mathcal{P}_{N_s}$ denote the measure associated with the process with $U_k = N_s(X_k)$.
Note that $N_s$ can output zero values where the process is uncontrolled. Let $\tilde{U}_k = N(X_k)-N_s(X_k)$ denote the difference between the nominal controller and the sampled controller, where $X_k$ is the state of the closed-loop system with $N_s(X_k)$. 

Let $S_l = \mathds{1}(X_k \in \mathcal{C}(L), \forall k \in \{1, 2, 3, \cdots , T_d\})$ and $S_f = \mathds{1}(X_k \in \mathcal{C}(L), \text{ for some } k \in \{1, 2, 3, \cdots , T_d\})$ be the indicator function of safety events for long-term safety and finite-time eventuality. We use $S$ to denote $S_l$ or $S_f$ given context.
Through importance sampling, one can sample $\sprob(z)$ based on controller $N_s$ and take weighted expectation with coefficients $\frac{d\mathcal{P}_N}{d\mathcal{P}_{N_s}}$ to estimate $\sprob(z)$ for $N$ as follows\footnote{Recall that $Z_t = [T, L, \phi(X_t), X_t]$ as defined in~\eqref{eq:Z_definition}. Here, we assume that $T$ and $L$ are fixed. The lowercase letters $x$, $z$ are used to denote the specific realizations of $X_t$ and $Z_t$.}
\begin{equation}
\label{eq:importance_sampling}
\begin{aligned}
     & \sprob(z):= \mathbb{P}\left(S \mid X_0 = x\right) \\
     = \; &\mathbb{E}_{\mathcal{P}_N}[S \mid X_0 = x] 
    = \mathbb{E}_{\mathcal{P}_{N_s}}\left[S \frac{d\mathcal{P}_N}{d\mathcal{P}_{N_s}} \mid X_0 = x \right]. \\
\end{aligned}
\end{equation}
By the Girsanov theorem~\cite{oksendal2013stochastic}, we have
\begin{equation}
\label{eq:girsanov_theorem}
    \frac{d \mathcal{P}_N}{d \mathcal{P}_{N_s}} = \exp \left\{\sum_{k=0}^T \frac{1}{2} \left\|\tilde{U}_k\right\|^2 d t +\tilde{U}_k^{\top} d W_k\right\}.
\end{equation}
With~\eqref{eq:importance_sampling} and~\eqref{eq:girsanov_theorem}, one can estimate the safety probability for any controller $N$ by sampling controller $N_s$. 
Note that with samples on $N_s$, we can estimate $\sprob(z)$ for arbitrary $N$ using~\eqref{eq:importance_sampling}.
Plus, when $N$ and $N_s$ are both represented by neural networks, we can still calculate $\frac{d \mathcal{P}_N}{d \mathcal{P}_{N_s}}$ by sampling $\Tilde{U}_k$ at each time step. 
Specifically, to perform the calculation, one can sample the process with controller $N_s$ for $n_{\text{sample}}$ times. We use 
\begin{equation}
\label{eq:index_long_term_safety}
    s_i := \mathds{1}(X_k \in \mathcal{C}(L), \forall k \in \{1, 2, 3, \cdots , T_d\})
\end{equation}
as an indicator of safety for the $i$-th sampled trajectory for long-term safety probability calculation, and 
\begin{equation}
\label{eq:index_finte_time_eventuality}
    s_i := \mathds{1}(X_k \in \mathcal{C}(L), \text{ for some } k \in \{1, 2, 3, \cdots , T_d\})
\end{equation}
for finite-time eventuality. In this way, we can estimate~\eqref{eq:importance_sampling} with
\begin{equation}
\label{eq:IS_calculation}
    \sprob(z) \approx \frac{1}{n_{\text{sample}}}\sum_{i=1}^{n_{\text{sample}}} s_i w_i,
\end{equation}
where $w_i = \frac{d \mathcal{P}_N}{d \mathcal{P}_{N_s}}$ is the importance sampling weight in~\eqref{eq:girsanov_theorem} for the $i$-th sampled trajectory.
The calculation procedures for long-term safety probability and finite-time eventuality are summarized in Algorithm~\ref{alg:path_integral_IS}.

\begin{algorithm}
\caption{Path Integral Importance Sampling}\label{alg:path_integral_IS} 
\begin{algorithmic}

\State \textbf{Input:} $N$, $N_s$, $T_d$, $n_{\text{sample}}$, $z = [T, L, \phi(x), x]$

\For {$i \in\{1,2,3,\cdots,n_{\text{sample}}\}$}
\State $X_0 \gets x$
\For {$k\in \{0, 1, 2, \cdots , T_d-1\}$}

\State $U_k \gets N_s(X_k)$
\State $\tilde{U}_k \gets N(X_k)-N_s(X_k)$

\State Calculate $X_{k+1}$ using dynamics \eqref{eq:discrete_time_dynamics} with $U_k$

\EndFor

\State Calculate $s_i$ using~\eqref{eq:index_long_term_safety} or~\eqref{eq:index_finte_time_eventuality}
\State Calculate $w_i$ using~\eqref{eq:girsanov_theorem}
\EndFor
\State \textbf{Output:} Estimate $\sprob(Z)$ using~\eqref{eq:IS_calculation}

\end{algorithmic}
\end{algorithm}

\begin{remark}
    Physics-informed learning can also be used to further improve the efficiency of computation for safety-related probabilities in a way that generalizes to unsampled states or time horizons~\cite{wang2023generalizable,wang2024physics,ACC24,wang2025generalizable}.
\end{remark}


\subsubsection{Long-term avoidance and finite-time reachability}
The probability of long-term avoidance and finite-time reachability can be computed using approximate dynamic programming (DP)~\cite{kariotoglou2013approximate, abate2006probabilistic}. 
We first show that both the discrete estimation of long-term avoidance and finite-time reachability can be formulated in the form of the reach-avoid problem defined in~\cite{kariotoglou2013approximate}. We define 
\begin{align*}
    \sprob_k(X) := &\sup_{N \in \mathcal{N}_d}\mathbb{P}\{X_{\tau} \in \gC(L),\\
    &\quad\forall \tau \in \{k,k+1,k+2\cdots,T_d\} \mid X_k = X \}\\
    = &\sup_{N \in \mathcal{N}_d}\mathbb{E} \Bigg[\prod_{j=k}^{T_d} \mathbb{1}\left(X_j\in\gC\right)\Bigg| X_k=X\Bigg] \label{eq:long_term_avoidance_equivalence}\eqnumber
\end{align*}
in case of long-term avoidance and 
\begin{align*}
    & \; \sprob_k(X) \\
    := &\sup_{N \in \mathcal{N}_d}\mathbb{P} \{X_{\tau} \in \gC(L) \\
    &\quad\text{ for some } \tau \in \{k,k+1,k+2\cdots,T_d\} \mid X_k = X \}\\
    =& \sup_{N \in \mathcal{N}_d}\mathbb{E}\Bigg[\sum_{j=k}^{T_d}\left(\prod_{i=k}^{j-1} \mathbb{1}\left(X_i\in\gC^c\right)\right) \mathbb{1}\left(X_j\in\gC\right)\Bigg|X_k=X\Bigg],\label{eq:finite_time_reachability_equivalence}\eqnumber
\end{align*}
in case of finite-time reachability.
Here, $\gC^c$ is the complement of the safe set, i.e., unsafe set, and the expectations are taken assuming $U_\tau=N(X_\tau),\forall \tau\in\{k, k+1, k+2 \cdots,T_d\}$. Note that the right hand side of both \eqref{eq:long_term_avoidance_equivalence} and \eqref{eq:finite_time_reachability_equivalence} are special cases of the problem defined in~\cite{kariotoglou2013approximate}.
The goal is to estimate $\sprob_k(X)$ for $X \in \mathcal{X}$ and $k = \{T_d, T_d - 1, T_d-2 \cdots, 0\}$ using DP. Note that the indexing is in reverse time order as we solve for $\sprob_k(X)$ backward in time.

Let $\hat{\sprob}_{k} (X)$ denote the estimation of these quantities at time $k\Delta t$, which estimates \eqref{eq:long_term_avoidance_equivalence} for long-term avoidance or \eqref{eq:finite_time_reachability_equivalence} for finite-time reachability. We define
\begin{equation}   
\mathcal{T}_{U}^{\text{avoid}}[V]\left(X\right) :=  \mathbbm{1}(X \in \gC) \int_{\mathcal{X}} V(Y)Q(dY|X,U)
\end{equation}
for long-term avoidance, and we define
\begin{equation}    \mathcal{T}_{U}^{\text{reach}}[V]\left(X\right) := \mathbbm{1}(X \in \gC) + \mathbbm{1}(X \notin \gC) \int_{\mathcal{X}} V(Y)Q(dY|X,U)
\end{equation}
for finite-time reachability.
Here $Q(x'| x, u) := \mP(X_{i+1}=x' \mid X_i=x, U_i=u)$ is the process stochastic kernel describing the evolution of $X$, and $V: \mathcal{X} \rightarrow [0,1]$ is any function of interest~\cite{kariotoglou2013approximate}. For our purpose, we choose $V$ to be $\hat{\sprob}_{k}$. We use $\mathcal{T}_{U}$ to denote $\mathcal{T}_{U}^{\text{avoid}}$ or $\mathcal{T}_{U}^{\text{reach}}$ given context for simplicity.

We define the linear programming as
\begin{equation}
\label{eq:LP}
\begin{aligned}
\min _{w_1, \ldots, w_M} & \int_{\mathcal{X}} \hat{\sprob}_k(X) dX \\
\text { s.t } & \hat{\sprob}_k\left(X^s\right) \geq \mathcal{T}_{U^s}\left[\hat{\sprob}_{k+1}\right]\left(X^s\right), \quad \forall s \in\left\{1, \ldots, n_s\right\} \\
& \hat{\sprob}_k\left(X^s\right)=\sum_{i=1}^M w_i \prod_{j=1}^n \psi\left(X^s[j]; c_{i, j}, \nu_{i, j}\right)
\end{aligned}
\end{equation}
where $(X^s, U^s)$ are sampled from a uniform distribution on the space $\mathcal{X}\times\mathcal{U}$, and $n_s$ is the number of samples. Here, we use Gaussian kernels to parameterize $\hat{\sprob}_k$, where $M$ is the number of kernels, $n$ is the number of features, 
and
\begin{equation}
    \psi\left(x ; c, \nu\right):=\frac{1}{\sqrt{2 \nu \pi}} e^{-\frac{1}{2} \frac{\left(x-c\right)^2}{\nu}}
\end{equation}
is the Gaussian function. The constants $c_{i,j}$, $i\in\{1,2,3,\cdots,M\}$, $j\in\{1,2,3,\cdots,n\}$ are sampled from a uniform distribution in $\gC$ (or a bounded subset of interest in $\gC$) for long-term avoidance and in $\gC^c$ (or a bounded subset of interest in $\gC^c$) for finite-time reachability. The constants $\nu_{i,j}$, $i\in\{1,2,3\cdots,M\}$, $j\in\{1,2,3,\cdots,n\}$ are randomly sampled on $\mathbb{R}^+$.\footnote{In practice, we can uniformly sample $v_{i, j} \sim \text{Unif}(1,2)$. As the Gaussian Radial basis functions are universal approximators with sufficiently many basis elements~\cite{park1991universal}, $\hat{\sprob}_{0}(X)$ can approximate the true value with arbitrary precision as the number of basis elements ($M$ and $n$) increase.} We solve the weight $w_i$ for the linear program~\eqref{eq:LP} to approximate $\hat{\sprob}_k(X)$. 
The entire recursive approximation process is summarized in Algorithm~\ref{alg:value_approx}. The output $\hat{\sprob}_{0}(X)$ is an approximation of $\sprob(Z)$ in terms of long-term avoidance and finite-time reachability.

\begin{algorithm}
\caption{Recursive Approximation}\label{alg:value_approx} 
\begin{algorithmic}

\State \textbf{Input:} $T_d$

\For {$i\in\{1,2,3,\cdots,M\}$}
\For {$j\in\{1,2,3,\cdots,n\}$}
\State Sample $c_{i,j}$ from a uniform distribution in $\gC$ or $\gC^c$
\State Sample $\nu_{i,j}$ on $\mathbb{R}^+$
\EndFor
\EndFor

\State Initialize $\hat{\sprob}_{T_d}(X) = \mathbbm{1}(X \in \gC)$ for all $X \in \mathcal{X}$

\For {$k \in \{T_d-1,T_d-2,T_d-3 \cdots, 0\}$}
\For {$s\in\{1,2,3,\cdots,n_s\}$}
\State Sample $(X^s, U^s)$ from a uniform distribution on $\mathcal{X} \times \mathcal{U}$
\State Evaluate $\mathcal{T}_{U^s}[\hat{\sprob}_{k+1}](X^s)$


\EndFor

\State Solve the linear program~\eqref{eq:LP} to get $\hat{\sprob}_{k}(X)$

\EndFor

\State \textbf{Outputs:} $\hat{\sprob}_{0}(X)$

\end{algorithmic}
\end{algorithm}

\begin{remark}
Based on~\cite{thijssen2015path, hammersley2013monte}, with a large enough number of samples, the estimation error can be made arbitrarily small with a sufficient number of samples. Besides, the long-term safety probability of Type 1 and Type 3 are solutions to partial differential equations (PDEs)~\cite{chern2021safe, prandini2006stochastic}. There exist numerical methods to solve these PDEs such as the Crank–Nicolson method~\cite{crank1947practical} that are known to be convergent, \ie the numerical solution approaches the exact solution as the grid size approaches $0$. 
\end{remark}

\subsection{Safe Control Algorithms}
\label{SS:Proposed_Algorithm}

Here, we propose two safe control schemes based on the safety conditions introduced in subsection~\ref{SS:safety_conditions}. In both schemes, the value of $\sprob$ is defined in~\eqref{eq:safe event} as type 1 or 2 when the safety specification is given as forward invariance condition, and as type 3 or 4 when the safety specification is given as forward convergence condition.

\subsubsection{Additive modification}
\label{SSS:additive}

We propose a control policy of the form 
\begin{align}
\label{eq:additive_modification_policy}
    N'(X_t, L_t, T_t)=N(X_t)+\addmodfunc(Z_t)(\gL_{\tilde{g}}\sprob(Z_t))^\top.
\end{align}
Here, $N$ is the nominal control policy defined in \eqref{eq:nominal_controller}. 
The mapping $\addmodfunc:\R^{n+3}\rightarrow\R_+$ is chosen to be a non-negative function that are designed to satisfy the assumptions of Theorem~\ref{lm:main_lemma} and makes $U_t = N'(X_t, L_t, T_t)$ to satisfy \eqref{eq:safety_condition_each_Zt} at all time. Then, the control action $U_t = N'(X_t, L_t, T_t)$ yields
\begin{align}
\label{eq:additive_modification_policy-exp}
&D_\sprob (Z_t,N'(X_t,L_t,T_t)) = \gL_{\tilde{f}}\sprob(Z_t) + (\gL_{\tilde{g}}\sprob(Z_t)) N(X_t) \nonumber \\
& \quad +\addmodfunc \gL_{\tilde{g}}\sprob (Z_t)\left(\gL_{\tilde{g}}\sprob (Z_t)\right)^\top +
\frac{1}{2}\text{tr} \left( \tilde{\sigma}\tilde{\sigma}^\top\Hess \sprob (Z_t)\right).
\end{align}
As $\addmodfunc$ is non-negative, the term $\addmodfunc \gL_{\tilde{g}}\sprob \left(\gL_{\tilde{g}}\sprob \right)^\top $ in \eqref{eq:additive_modification_policy-exp} takes non-negative values. This implies that the second term additively modify the nominal controller output $N(X_t)$ in the ascending direction of the safety probability. 

One can find the optimal $\kappa$ through the following constrained empirical risk minimization~\cite{marcus2023constrained}
\begin{align}
\label{eq:find_k}
\begin{aligned}
    \kappa^* = && \argmin_\kappa & \ \  \mathbb{E}\left[J(\kappa, Z_t)\right] \\
    && \text{s.t.   }  \ \
    & D_\sprob (Z_t,N'(X_t,L_t,T_t)) \\
    && &+\alpha \left(\sprob(Z_t) - (1-\epsilon) \right) \geq 0,
\end{aligned}
\end{align}
where $J(\kappa,Z_t)$ is the objective function of choice to minimize. The expectations in \eqref{eq:find_k} are taken over the process of the closed-loop system with~\eqref{eq:additive_modification_policy}. 
\rev{Note that~\eqref{eq:find_k} only needs to be solved once offline, then the safe control can be calculated via~\eqref{eq:additive_modification_policy} online. In practice, through polynomial parameterization (or other basis functions parameterization) of $\kappa$,~\eqref{eq:find_k} can be effectively solved with sum-of-squares optimizations~\cite{parrilo2000structured, lasserre2009moments}, similar to what has been done in the barrier certificate literature~\cite{prajna2007framework}. }

\subsubsection{Constrained optimization}
\label{SSS:conditioning}
Similarly, we can formulate the constrained optimization in the model predictive control (MPC) setting as the following
\begin{equation}
\label{eq:safe_mpc_formulation}
\begin{aligned}
    & \argmin_{\mathbf{u} = \{U_t, \cdots , U_{t + (\bar{H}-1)\Delta t}\}}  \ 
    \mathbb{E} \left[J(X_{t+\Delta t}, X_{t+ 2\Delta t}, \cdots, X_{t+\bar{H} \Delta t}, \mathbf{u})\right] \\
    & \qquad \text{s.t.} \ \ D_\sprob (Z_t,U_t) \geq -\alpha \left(\sprob(Z_t) - (1-\epsilon) \right), \\
    & \qquad \qquad X_{\tau+\Delta t}=F_d(X_\tau,U_\tau,W_\tau),\\
    & \qquad \qquad \qquad \qquad \forall \tau\in\{t,t+\Delta t,\cdots,t + (\bar{H}-1)\Delta t\},\\
\end{aligned}
\end{equation}
where $J$ is the objective function of MPC, $\bar{H}$ is the outlook prediction horizon of MPC, \rev{and $D_\sprob (Z_t,U_t)$ is the infinitesimal generator defined in~\eqref{eq:infgen_to_mf}}. The control action $N'(X_t,L_t,T_t)$ takes the value of $U_t$ in $\mathbf{u}$.
The optimization problem is designed to satisfy the assumptions of Theorem~\ref{lm:main_lemma} to comply with the safety specification~\eqref{eq:safety_specification}.
Note that once we formulate the safe control problem into constrained optimization, one can add additional constraints into the same optimization problem~\cite{garcia1989model,mayne2000constrained}. 
It is worth noting that the MPC prediction horizon $\bar{H}$ can be shorter than the horizon to ensure safety because the safety constraint is only imposed for $Z_t$ in \eqref{eq:safe_mpc_formulation}. 
In the special case where the MPC prediction horizon is 1, the constrained optimization becomes
\begin{align}
\label{eq:conditioning}
    \begin{aligned}
        N'(X_t, L_t, T_t) = && \argmin_{u} & \ \ \mathbb{E} \left[J(N(X_t),u)\right] \\
        && \text{s.t.} & \ \ \eqref{eq:safety_condition_each_Zt},
    \end{aligned}
\end{align}
where $J:\R^m\times\R^m\rightarrow\R$ is the objective function to be minimized. 
Note that one can use a performance-oriented nominal controller to define the objective function where $J$ penalizes the deviation from the nominal controller.
The constraint of \eqref{eq:conditioning} imposes that \eqref{eq:safety_condition_each_Zt} holds at all time $t$, and can additionally capture other design restrictions.\footnote{For example, $N'$ is Lipschitz continuous when $J(N(x),u) = u^\top H(x) u$ with $ H(x)$ being a positive definite matrix (pointwise in $x$).}

Both additive modification and constrained optimization are commonly used in the safe control of deterministic systems (see~\cite[subsection II-B]{chern2021safe} and references therein). These existing methods are designed to find control actions so that the vector field of the state does not point outside of the safe set around its boundary. In other words, the value of the barrier function will be non-decreasing in the infinitesimal future outlook time horizon whenever the state is close to the boundary of the safe set. 
However, such myopic decision-making may not account for the fact that different directions of the tangent cone of the safe set may lead to vastly different long-term safety. In contrast, the proposed control policies \eqref{eq:additive_modification_policy} and \eqref{eq:conditioning} account for the long-term safe probability in $\sprob$, and are guaranteed to steer the state toward the direction with non-decreasing long-term safe probability when the tolerable long-term unsafe probability is about to be violated. 
Note that $\sprob$ can be computed offline and the controller only needs to myopically evaluate~\eqref{eq:additive_modification_policy} or solve~\eqref{eq:conditioning} in real-time execution. In this case, the online computation efficiency is comparable to common myopic barrier function-based methods in a deterministic system.

The overall safe control strategy is shown in Algorithm~\ref{alg:safe_control}. Note that we present the algorithm in discrete time as most digital controller requires discretization.
We first initialize $H$ in~\eqref{eq:time_horizon_cases} for outlook time horizon, time step $\Delta t$, initial state $X_0$, nominal controller $N$, and maximum simulation time $T_{\text{max}}$. 
Then we acquire the safety probability of each state using the computation methods introduced in section~\ref{SS:safe_prob_computation}.
At each time step $t$, 
we use the safe control methods in section~\ref{SS:Proposed_Algorithm} to calculate the safe control action $U_t$. We execute this control action to step the dynamics and this together forms our overall safe control strategy.












\begin{algorithm}
\caption{Safe control}\label{alg:safe_control} 
\begin{algorithmic}

\State Initialize $H$, $\Delta t$, $X_0$, $N$, $T_{\text{max}}$

Get $\sprob(X)$ for $X \in \mathcal{X}$ through Algorithm~\ref{alg:path_integral_IS} or Algorithm~\ref{alg:value_approx}

\State $t \gets 0$

\While{$t < T_{\text{max}}$}


\State Solve~\eqref{eq:additive_modification_policy} or~\eqref{eq:conditioning} or~\eqref{eq:safe_mpc_formulation} to get safe control $U_t$

\State Step the dynamics~\eqref{eq:x_trajectory} with $U_t$ to get $X_{t+\Delta t}$

\State $t \gets t + \Delta t$

\EndWhile

\end{algorithmic}
\end{algorithm}

\begin{remark}
    In our proposed safe controller, since we only need the value of $N(X_t)$ at time step $t$, we can just query from the neural network and get the value, and there is no need to derive the nominal control from the exact expression of $N(X)$. This feature indicates that our proposed safe control method is naturally compatible with neural network-based nominal controller, and can be used for cases when complex black-box controllers are used.
\end{remark}

So far, we consider safe control problems and aim to find a constraint on action to be imposed on some nominal controller or in model predictive control. In the following section, we consider reinforcement learning and we aim to learn an optimal controller for a given objective function that also satisfies the safety constraints during and after learning.

\section{Integration with Reinforcement Learning}
\label{sec:safe_pg_new}
In this section, we discuss how to integrate our proposed safe control method with reinforcement learning (RL) methods for control problems. \rev{Specifically, we show how to optimize the control policy with RL in a way that accounts for the potential modification from the safety certificate, to meet RL objectives while ensuring long-term safety.}
In the proposed RL methods that incorporate the proposed safety constraints, its performance guarantee (Theorem~\ref{lm:main_lemma}) also applies.

Specifically, we consider policy gradient and Q-learning, as they are the two major approaches for solving reinforcement learning problems~\cite{sutton1999policy, watkins1992q}.
We follow the discrete-time setup for reinforcement learning as described in~\cite[Chapter 3]{sutton2018reinforcement}, and use $\{X_t\}$ and $\{U_t\}$ for state and action instead for notation consistency. 
\rev{Note that in this setup, the ability to perform roll-outs of the system dynamics is assumed, which is essential for calculating values such as safety probabilities and value functions.}

We define the nominal policy parameterized by $\theta$ as\footnote{If one is to use discretized state or action, the notation $\mathbb{P}( \cdot )$ and $\mathbb{P}( \cdot | \cdots )$ is used to denote probability and conditional probability. If one is to use a continuous state or action, the notation they are used to denote density or conditional density functions.}
\begin{equation}
\label{eq:parameterized_policy}
    \pi_\theta(\hat{U}_t \mid X_t) = \mathbb{P}(\hat{U}_t \mid X_t),
\end{equation}
where $X_t$ is the state at time $t$, and $\hat{U}_t$ is the nominal control action at time $t$. Then, we can find the safe control via the proposed probabilistic safety certificate
\begin{equation}
    U_t = \text{Proj}_{\mathcal{S}}(\hat{U}_t),
\end{equation}
where $\text{Proj}_{(\cdot)}$ is operator of taking projection onto $(\cdot)$, and $\mathcal{S}:=\{U : D_\sprob (Z_t,U) \geq -\alpha \left(\sprob(Z_t) - (1-\epsilon) \right) \}$ is the set of safe control actions. We represent this process by  
\begin{equation}
\label{eq:safety_filter}
    G(U_t \mid X_t, \hat{U}_t) = \mathbb{P}(U_t \mid X_t, \hat{U}_t),
\end{equation}
where $G$ outputs a distribution over the action space with inputs of the state and the action sampled from policy $\pi_\theta$. 
We consider $G$ to be independent from the policy parameters $\theta$. For type 2 and type 4 safety specifications, since we take the maximum of the probability over all policies, the safety filter will be independent from $\theta$ by definition. If type 1 and type 3 safety specifications are to be used, then the nominal policy that determines~\eqref{eq:long_term_safety} and~\eqref{eq:finite_time_eventuality} should not depend on $\theta$.


\subsection{Integration with Policy Gradient}
We first consider integration with policy gradient methods. 
For each episode, we acquire the following sample trajectory
\begin{equation}
    \tau=\{X_{0:H_r}, \hat{U}_{0:H_r-1}, U_{0:H_r-1}\},
\end{equation}
where $H_r$ is the horizon for the finite time reinforcement learning problem.
We define the reward function for each state $X$ and action $U$ to be $r(X, U)$, and thus the cumulative reward for a trajectory $\tau$ becomes
\begin{equation}
    R(\tau)=\sum_{t=0}^{H_r-1} r\left(X_t, U_t\right).
\end{equation}
We define the value function for policy $\pi_\theta$ as
\begin{equation}
    V^\theta=\mathbb{E}_{\tau \sim (\pi_\theta, G)}[R(\tau)],
\end{equation}
which is the expected cumulative rewards under policy $\pi_\theta$. \rev{In the following lemma, we derive the policy gradient calculation for RL with the proposed safety certificate.}

\begin{lemma}
\label{lem:policy_gradient_safety_filter}
Consider the parameterized policy defined in~\eqref{eq:parameterized_policy} and the safety filter represented by~\eqref{eq:safety_filter}, where $G$ does not depend on the policy parameters $\theta$. Then, the policy gradient with the safety filter can be calculated as
\begin{equation}
\label{eq:safe_PG_update}
\begin{aligned}
\nabla V^\theta & = \mathbb{E}_{\tau \sim (\pi_\theta, G)}\left[R(\tau) \sum_{t=0}^{H_r-1} \nabla \log(\pi_\theta (\hat{U}_t \mid X_t)) \right],
\end{aligned}
\end{equation}
where the expectation is taken over sample trajectories generated with policy $\pi_\theta$ together with the safety filter $G$.
\end{lemma}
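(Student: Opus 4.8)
The plan is to apply the standard likelihood-ratio (REINFORCE) argument, while carefully tracking where the policy parameter $\theta$ enters the joint law of a trajectory. First I would write down the density of a full trajectory $\tau = \{X_{0:H_r}, \hat{U}_{0:H_r-1}, U_{0:H_r-1}\}$ generated by the coupled process: at each step the nominal action $\hat{U}_t$ is drawn from $\pi_\theta(\cdot \mid X_t)$, the filtered action $U_t$ is drawn from $G(\cdot \mid X_t, \hat{U}_t)$, and the next state $X_{t+1}$ is drawn from the transition density $p(\cdot \mid X_t, U_t)$ induced by the discrete dynamics~\eqref{eq:discrete_time_dynamics}. This factorizes as
\begin{equation*}
p_\theta(\tau) = p(X_0) \prod_{t=0}^{H_r-1} \pi_\theta(\hat{U}_t \mid X_t)\, G(U_t \mid X_t, \hat{U}_t)\, p(X_{t+1} \mid X_t, U_t),
\end{equation*}
where $p(X_0)$ is the initial-state law; crucially, neither $p(X_0)$, $G$, nor the transition density $p(\cdot\mid X_t,U_t)$ carries any $\theta$-dependence.

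Next I would write the value as $V^\theta = \int R(\tau)\, p_\theta(\tau)\, d\tau$ (a sum in the discrete-state case) and apply the log-derivative identity $\nabla p_\theta(\tau) = p_\theta(\tau)\, \nabla \log p_\theta(\tau)$, which gives
\begin{equation*}
\nabla V^\theta = \mathbb{E}_{\tau \sim (\pi_\theta, G)}\left[ R(\tau)\, \nabla \log p_\theta(\tau) \right],
\end{equation*}
using that $R(\tau)$ has no explicit $\theta$-dependence. I would then expand $\log p_\theta(\tau)$ into the sum of its log-factors and take $\nabla$: every $\theta$-free term is annihilated, namely the initial law $\log p(X_0)$, every filter term $\log G(U_t \mid X_t, \hat{U}_t)$ (by the standing hypothesis that $G$ is independent of $\theta$), and every dynamics term $\log p(X_{t+1}\mid X_t,U_t)$. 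What survives is exactly $\sum_{t=0}^{H_r-1} \nabla \log \pi_\theta(\hat{U}_t \mid X_t)$, yielding~\eqref{eq:safe_PG_update}.

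The main obstacle is conceptual rather than computational: one must argue that the $\theta$-independence of $G$ is what decouples the filter from the score, even though the filtered action $U_t$ — and hence the reward and all downstream states — still depends on $\theta$ through the upstream draw of $\hat{U}_t$. The key point to emphasize is that this dependence is already fully encoded in the $\pi_\theta$ factors of the joint law, so no additional score term from $G$ appears. This is precisely where the hypothesis of the lemma is used: were $G$ allowed to depend on $\theta$ (as would happen for type 1/3 specifications whose nominal policy inside $\sprob$ depends on $\theta$), an extra term $\sum_t \nabla \log G(U_t\mid X_t,\hat{U}_t)$ would remain, breaking~\eqref{eq:safe_PG_update}.

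Finally, to make the interchange of $\nabla$ and $\int$ rigorous, I would invoke the usual regularity assumptions (smoothness of $\theta \mapsto \pi_\theta$ together with a dominating integrable envelope, so that dominated convergence applies), which hold under the standard policy-gradient setup of~\cite{sutton2018reinforcement} being followed here. I expect the bulk of the write-up to be the factorization and the bookkeeping of which log-factors vanish; the analytic justification of differentiating under the integral is routine and can be stated as a hypothesis inherited from the standard setting.
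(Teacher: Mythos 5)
Your proposal is correct and follows essentially the same route as the paper's proof: factorize the trajectory law into the initial distribution, the $\pi_\theta$ factors, the filter $G$, and the transition kernel, apply the log-derivative trick, and observe that only the $\pi_\theta$ log-factors survive differentiation because $G$ and the dynamics are $\theta$-free. Your added remarks on the domination condition for differentiating under the integral and on restricting to trajectories of nonzero probability are refinements the paper handles only implicitly, but they do not change the argument.
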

\begin{proof} [Proof (\cref{lem:policy_gradient_safety_filter})]
For a trajectory $\tau$, we have
\begin{equation}
\begin{aligned}
\mathbb{P}^\theta(\tau)=\mathbb{P}(X_0) \prod_{t=0}^{H_r-1} \pi_\theta (\hat{U}_t | X_t) G(U_t | X_t, \hat{U}_t)  \mathbb{P}\left(X_{t+1} | X_t, U_t\right).
\end{aligned}
\end{equation}
Then we have the policy gradient being
\begin{equation}
    \nabla V^\theta =\nabla \int_{\mathcal{I}} R(\tau) \mathbb{P}^\theta(\tau)d\tau =\nabla \int_{\bar{\mathcal{I}}} R(\bar \tau) \mathbb{P}^\theta(\bar \tau) d \bar \tau
\end{equation}
where $\mathcal{I}$ is the set of all possible trajectories and $\bar{\mathcal{I}}$ is set of trajectories with non-zero probabilities. Then we have for any $\bar \tau \in \bar{\mathcal{I}}$,
\begin{align}
\nabla V^\theta & =\nabla \int_{\bar{\mathcal{I}}} R(\bar \tau)\mathbb{P}^\theta(\bar \tau)d\bar\tau=\int_{\bar{\mathcal{I}}} R(\bar \tau) \nabla  \mathbb{P}^\theta(\bar \tau) d \bar\tau \\
& =\int_{\bar{\mathcal{I}}} R(\bar \tau) \mathbb{P}^\theta(\bar \tau) \nabla \log \mathbb{P}^\theta(\bar \tau) d\bar\tau,\label{eq:PG_int}
\end{align}
where we know that
\begin{align}
    \log \mathbb{P}^\theta(\bar \tau)
    & = \nabla \log \biggl[\mathbb{P}(X_0) \prod_{t=0}^{H_r-1} \pi_\theta(\hat{U}_t \mid X_t) G(U_t \mid X_t, \hat{U}_t) \nonumber\\
    & \qquad \qquad \qquad  \mathbb{P}(X_{t+1} \mid X_t, U_t)\biggr] \\
    & = \nabla \biggl[\log(\mathbb{P}\left(X_0\right)) + \sum_{t=0}^{H_r-1} \biggl(\log(\pi_\theta (\hat{U}_t \mid X_t)) \nonumber\\ 
    & + \log(G(U_t \mid X_t, \hat{U}_t)) + \log(\mathbb{P}\left(X_{t+1} \mid X_t, U_t\right))\biggr) \biggr] \\
    & = \nabla \left[\sum_{t=0}^{H_r-1} \log(\pi_\theta (\hat{U}_t \mid X_t)) \right] \label{eq:g_independent}\\
    & = \sum_{t=0}^{H_r-1} \nabla \log(\pi_\theta (\hat{U}_t \mid X_t)).\label{eq:log_p_computation}
\end{align}
Here, \eqref{eq:g_independent} holds because $G$ is independent from the policy $\pi_\theta$.
Combining~\eqref{eq:PG_int} and~\eqref{eq:log_p_computation}, we get~\eqref{eq:safe_PG_update}.
\end{proof}

\rev{Lemma~\ref{lem:policy_gradient_safety_filter} states that the policy gradients with long-term safety constraints can be computed similarly to standard policy gradients, but with modified control actions and policy distributions.} With~\eqref{eq:safe_PG_update}, one can perform policy optimization $\theta \gets \theta + \eta \nabla V^{\theta}$ to train the reinforcement learning agent with $\eta$ being the learning rate~\cite{sutton2018reinforcement}.
We conclude the procedures to apply the proposed safety filter to policy gradient in Algorithm~\ref{alg:safe_pg}. For each policy gradient iteration in the total $N_{\text{iter}}$ iterations, we collect sample trajectories for $N_{\text{eps}}$ episodes, and calculate the modified policy gradient using~\eqref{eq:safe_PG_update}. 
Then we can perform policy optimization with the modified policy gradient. 
\begin{algorithm}
\caption{Safe Policy Gradient}
\label{alg:safe_pg} 
\begin{algorithmic}

\State Initialize $\theta_0$, $G$, $N_{\text{iter}}$, $N_{\text{eps}}$

\For{$i \in\{ 1,2,3,\cdots,N_{\text{iter}}\}$} 


\For{$k\in\{1,2,3,\cdots,N_{\text{eps}}\}$}

\State Sample trajectory $\tau_k$ with $\pi_{\theta_i}$
\State Calculate $R(\tau_k)$

\EndFor

\State Estimate $\nabla V^{\theta_i}$ in~\eqref{eq:safe_PG_update} with $\{R(\tau_k):k\in\{1,\cdots,N_{\text{eps}}\}\}$

\State Perform policy optimization $\theta_{i+1} = \theta_{i} + \eta_i \nabla V^{\theta_i}$

\EndFor

\end{algorithmic}
\end{algorithm}

\subsection{Integration with Q-learning}

In this section, we introduce integration of the safety certificate with Q-learning. For simplicity, we consider the tabular setting with discrete and finite state space and control space as in~\cite{watkins1992q}. Generalization to continuous state and action spaces can be achieved by using deep Q learning based method~\cite{gaskett1999q}.
One can think of the safety filter as part of the environment (open-loop dynamics), so that the Q-learning process will not be affected. We define the modified Q function as
\begin{equation}
\label{eq:modified_Q_function}
    Q^{\pi_{\theta},G}(x, u):=\mathbb{E}_{\tau \sim (\pi_\theta, G)}\left[\sum_{t=0}^{\infty} \gamma^t r(X_t, U_t) \mid X_0=x, U_0=u\right],
\end{equation}
which is the expected discounted reward starting from state $x$ and action $u$, following policy $\pi_\theta$ modified by a safety filter $G$. The only difference with the standard Q function is that~\eqref{eq:modified_Q_function} includes the safety filter as part of the policy. Similarly, we can define the modified Bellman operator as
\begin{equation}
\begin{aligned}
    & \mathcal{T}(Q)(x, u)\\ 
    := \ & r(x, u)+\gamma \;  
 {\mathbb{E}}_{{x^{\prime} \sim \mathbb{P}(X_{t+1} \mid X_t=x, U_t=u)}}\left[\max _{u^{\prime} \in \mathcal{U}} Q\left(x^{\prime}, u^{\prime}\right)\right].
\end{aligned}
\end{equation}
Then, the Q-learning update rule for each iteration $i$ is given by
\begin{equation}
\label{eq:q_function_update}
\begin{aligned}
& \quad \; Q_{i+1}(x, u) \\
& =\left(1-\eta_i\right) Q_i(x, u)+\eta_i \mathcal{T}\left(Q_i\right)(x, u) \\
& =Q_i(x, u)+\eta_i\left(r(x, u)+\gamma \max _{u^{\prime}} Q_i\left(x^{\prime}, u^{\prime}\right)-Q_i(x, u)\right)
\end{aligned}  
\end{equation}
where $\eta_i$ is the learning rate.
From~\cite{watkins1992q}, when the reward function $r$ is bounded, the learning rate satisfies $0 \leq \eta_i < 1$, and 
\begin{equation}
    \sum_{i=1}^{\infty} \eta_i=\infty, \quad \sum_{i=1}^{\infty} \eta_i^2<\infty,
\end{equation}
the Q-learning converges to the optimal Q function asymptotically with probability 1.

The procedures for the proposed safe Q-learning method are summarized in Algorithm~\ref{alg:safe_q_learning}. Note that the algorithm only learns Q function. There are many choices for how to obtain the control policy using Q function (\eg $\epsilon$-greedy).

\begin{algorithm}
\caption{Safe Q-learning}
\label{alg:safe_q_learning} 
\begin{algorithmic}

\State Initialize $G$, $N_{\text{iter}}$, $N_{\text{eps}}$

\State Initialize $Q_0(x,u)$ for $x \in \mathcal{X}$ and $u \in \mathcal{U}$

\For{$i\in \{0,1,2,\cdots,N_{\text{iter}}\}$} 

\State Initialize $X_0$, $\eta_i$


\For{$k\in\{0,1,2,\cdots,N_{\text{eps}}\}$}

\State $U_k \gets \argmax_U Q_i(X_k, U)$

\State Take $U_k$ and observe $X_{k+1}$, $r(X_k, U_k)$

\State Update $Q_{i+1}(X_k,U_k)$ using $\eqref{eq:q_function_update}$

\EndFor



\EndFor

\end{algorithmic}
\end{algorithm}

\section{Experiments for Control}
\label{sec:experiment_for_control}
In this section, we show the efficacy of our proposed safe control method in example use cases.

\subsection{Settings}
We compare our proposed controller with three existing safe controllers: stochastic control barrier functions (StoCBF)~\cite{clark2019control}, probabilistic safety barrier certificates (PrSBC)~\cite{luo2019multi}, and conditional-value-at-risk barrier functions (CVaR) \cite{ahmadi2020risk}.
These baseline methods for comparison are widely acknowledged methods for safe control of stochastic systems, and have been used in safe trajectory synthesis and optimization~\cite{pereira2021safe}, safe control of quadrotor swarms~\cite{batra2022decentralized}, risk-sensitive motion planning~\cite{dixit2021risk} and robot locomotion~\cite{schneider2024learning}, \textit{etc}.
Note that all three methods for comparison assume the observation of the current state and the access to the system uncertainty model, which are the basic assumptions in our method as well. 

We consider the following two settings: 
\begin{itemize}[leftmargin=*]
    \item {Worst-case safe control:} We use the controller that satisfies the safety condition with equality at all times to test the safety enforcement power of these safety constraints. Such control actions are the riskiest actions that are allowed by the safety condition. The use of such control actions allows us to evaluate the safety conditions separated from the impact of the nominal controllers. Here we want to see whether our proposed controller can achieve non-decreasing expected safety as intended.
    \item {Switching control:} We impose the safe controller only when the nominal controller does not satisfy the safety constraint. Here we want to see how the proposed controller performs in practical use, where typically there is a control goal that is conflicting with safety requirements.
\end{itemize}

We run simulations with $\Delta t = 0.1$ for all controllers unless otherwise specified. The initial state is set to $x_0 = 3$. For our controller, each Monte Carlo approximation uses $10000$ sampled trajectories. 
Details on the methods for comparison and the used parameters can be found in Appendix~\ref{sec:algs_for_comparison} in the extended version of the paper submitted along with the supplementary materials.


\subsection{Example Use Cases}
\subsubsection{System 1}
\label{sec:exp_linear_sys}
We consider the control affine system (\ref{eq:x_trajectory}) with $f(X_t) \equiv A = 2$, $g(X_t) \equiv 1$, $\sigma(X_t) \equiv 2$. 
The safe set is defined as
\begin{align}
\label{eq:safe_set_experiment}
\begin{split}
    \gC(0) = \left\{x \in \R : \phi(x) \geq 0 \right\}, 
\end{split}
\end{align}
with the barrier function $\phi(x) := x-1$. The safety specification is given as the forward invariance condition. 
The nominal controller is a proportional controller $N(X_t) = -2.5 X_t$. The closed-loop system with this controller has an equilibrium at $x=0$ and tends to move into the unsafe set in the state space.

Fig.~\ref{fig:worst_case} shows the results in the worst-case setting. The proposed controller can keep the expected safe probability $\mE[\sprob(X_t)]$ close to $0.9$ all the time, while others fail to keep it at a high level with used parameters. A major cause of failure is the accumulation of rare event probability, leading to unsafe behaviors. This shows the power of having a provable performance for non-decreasing long-term safe probability over time. For comparable parameters, the safety improves from StoCBF to PrSBC to CVaR. This is also expected as constraining the expectation has little control of higher moments, and constraining the tail is not as strong as constraining the tail and the mean values of the tail. 

\begin{figure}
\centering
\includegraphics[width=1\linewidth]{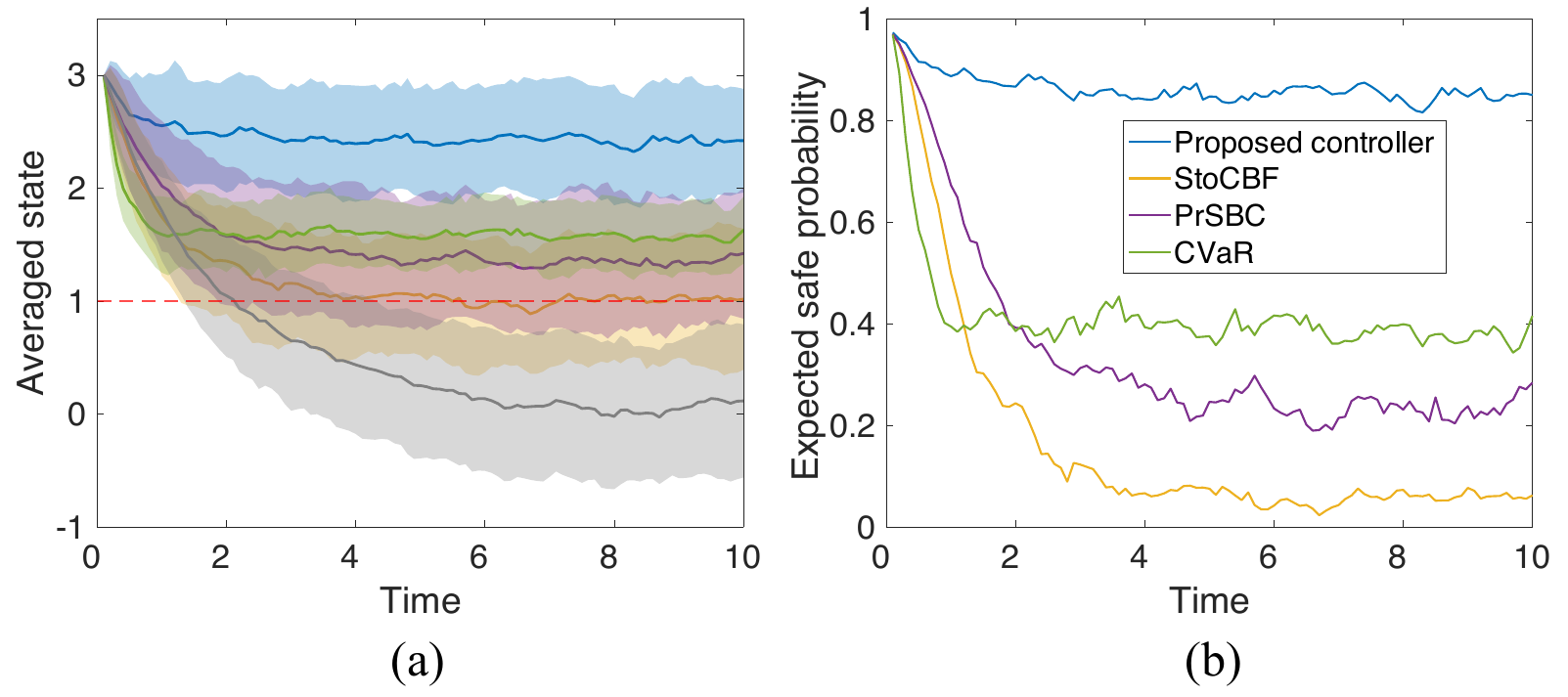}
\caption{Results in the worst-case setting. (a) the average system state over 100 trajectories. Red dotted line indicates the boundary of the safe set. (b) the expected safe probability.
\label{fig:worst_case}}
\end{figure}

Fig.~\ref{fig:switch control} shows the results in the switching control setting. We obtained the empirical safe probability by calculating the number of safe trajectories over all total trials. In this setting, the proposed controller can keep the state within the safe region with the highest probability compared to other methods, even when there is a nominal control that acts against safety criteria. This is because the proposed controller directly manipulates dynamically evolving state distributions to guarantee non-decreasing safe probability when the tolerable unsafe probability is about to be violated, as opposed to when the state is close to an unsafe region. Our novel use of forward invariance condition on the safe probability allows a myopic controller to achieve long-term safe probability, which cannot be guaranteed by any myopic controller that directly imposes forward invariance on the safe set.

\begin{figure}
\centering
\includegraphics[width=1\linewidth]{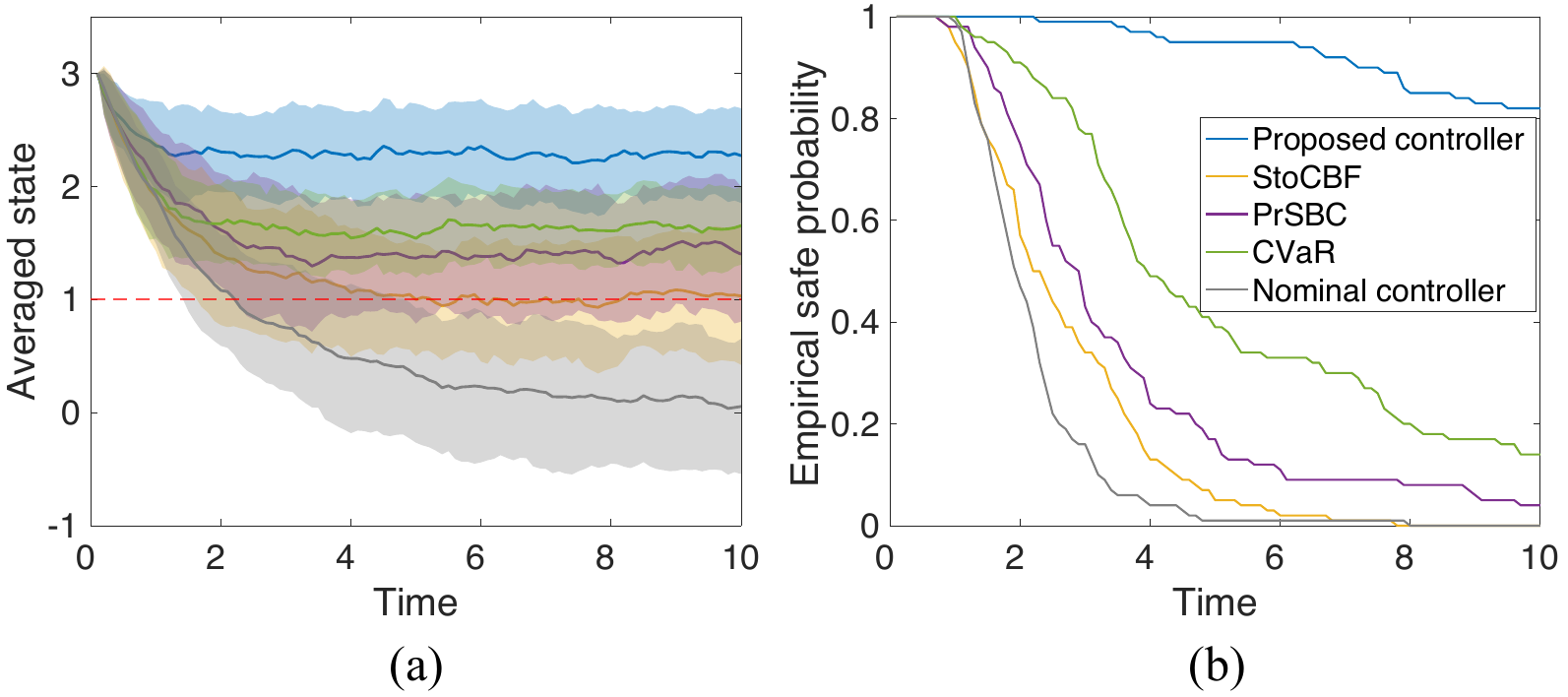}
\caption{Results in the switching control setting. (a) the averaged system state of 100 trajectories with its standard deviation. Red dotted line indicates the boundary of the safe set. (b) the empirical safe probability.
\label{fig:switch control}}
\end{figure}

\subsubsection{System 2}
We modify the system dynamics used in the previous section to show the performance of the proposed controller against nonlinear traps. Specifically, we consider the control affine system (\ref{eq:x_trajectory}) with 
\begin{equation}
\label{eq:nonlinear_dynamics_experiment}
\begin{aligned}
    f(x) = \begin{cases}
        2, \text{ if } x > 1.5 \\
        -3, \text{ otherwise,} 
    \end{cases}
     g(x) = \begin{cases}
        1, \text{ if } x > 1.5 \\
        0, \text{ otherwise.} 
    \end{cases}
\end{aligned} 
\end{equation}
This modification makes the system uncontrollable when the state $x$ reaches or is below $1.5$, even though $x=1.5$ remains safe. We keep noise magnitude $\sigma(X_t) \equiv 2$, safe set $\gC(0) = \left\{x \in \R : x-1 \geq 0 \right\}$, nominal controller $N(X_t) = -2.5 X_t$ the same. With the nominal controller, the system will reach $x=1.5$. Once the system reaches $x=1.5$, the state will to the origin, which is unsafe. The switch in the dynamics and the nonlinear trap will get the system to unsafe regions even though the starting point is safe. This design will help illustrate the advantage of the predictive nature of our proposed controller compared to the existing myopic ones.

Fig.~\ref{fig:worst_case_nonlinear} shows the results in the worst-case setting. The proposed controller can keep the expected safe probability $\mE[\sprob(X_t)]$ close to $0.8$ all the time, while all other methods get $0$ expected safe probability due to the nonlinear trap in the system. This shows the advantage of considering long-term safety instead of safety in the immediate next step. 

Fig.~\ref{fig:switch_control_nonlinear} shows the results in the switching control setting. In this setting, the proposed controller can keep the state within the safe region with high probability, while all other methods get to unsafe regions eventually. This is because the proposed controller encodes future safety information into the long-term safety probability, and avoids the potential traps in the system by enforcing high safety probability at all times.
The use of forward invariance condition on the safe probability also simplifies the control design process, as the proposed controller only needs encoded safety probability and does not require specific system dynamics in the design phase.

Safe control results on neural network-based nominal controllers and 3-dimensional systems can be found 
in Appendix~\ref{sec:nn_nominal_exp} and Appendix~\ref{sec:3d_results}, respectively.

\begin{figure}
\centering
\includegraphics[width=1\linewidth]{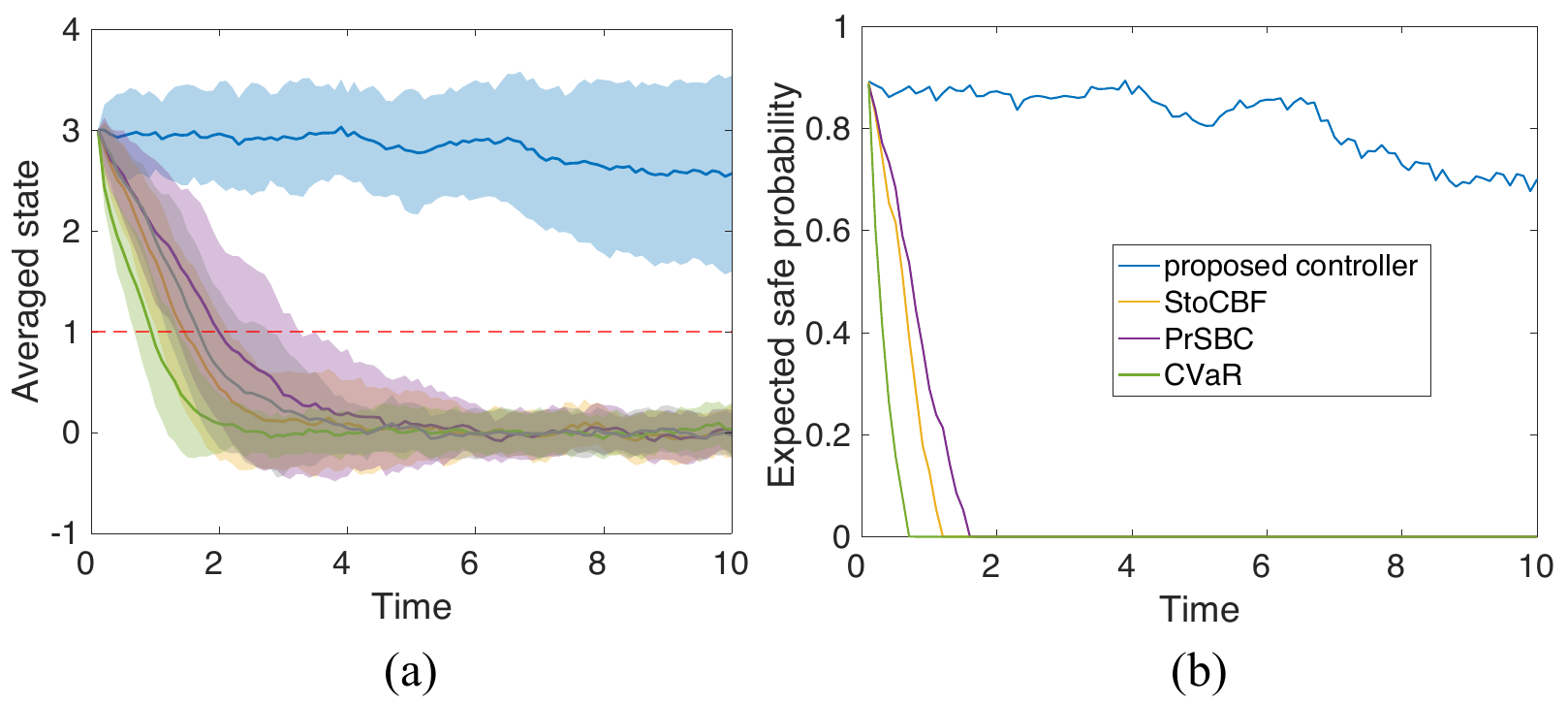}
\caption{Results in the worst-case setting with nonlinear dynamics~\eqref{eq:nonlinear_dynamics_experiment}. (a) the average system state over 50 trajectories. Red dotted line indicates the boundary of the safe set. Black dotted line indicates the boundary of the nonlinear trap. (b) the expected safe probability. 
\label{fig:worst_case_nonlinear}}
\end{figure}

\begin{figure}
\centering
\includegraphics[width=1\linewidth]{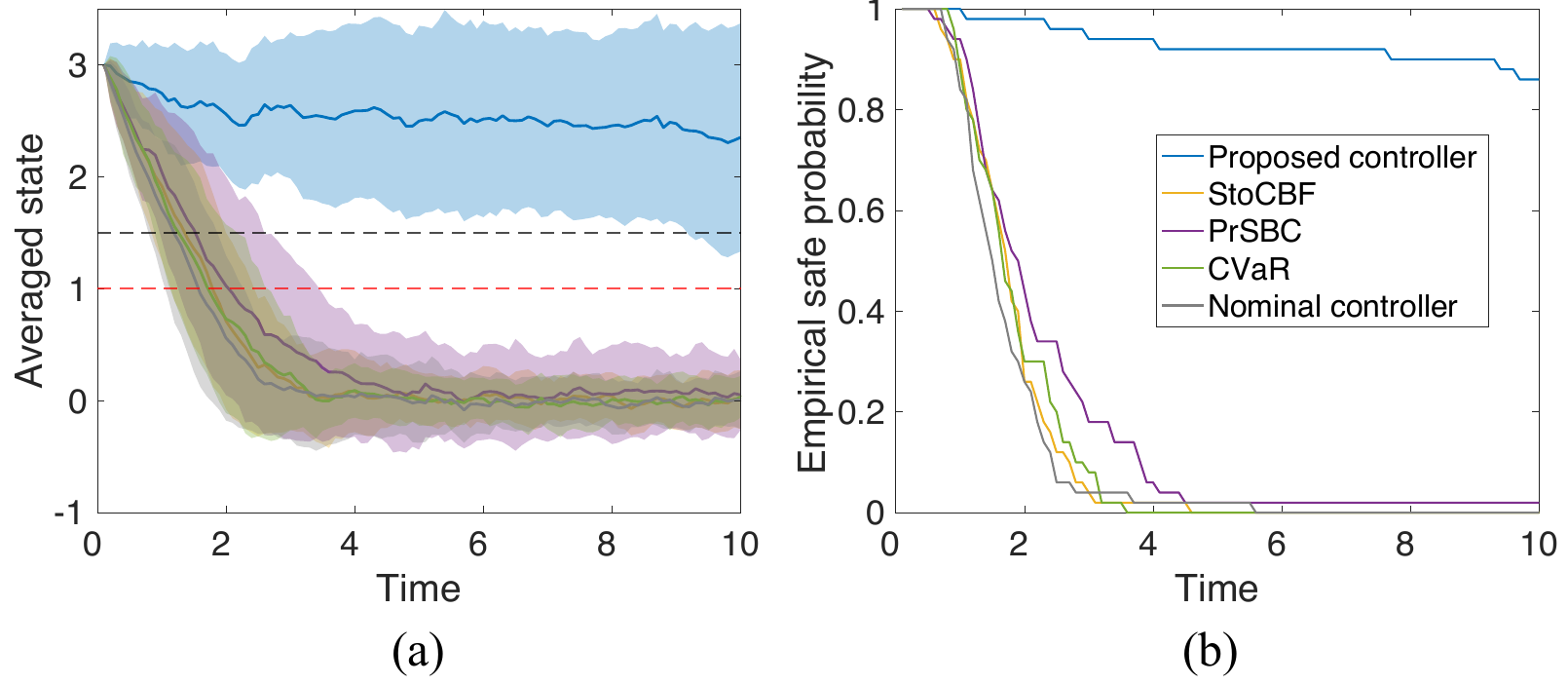}
\caption{Results in the switching control setting with nonlinear dynamics~\eqref{eq:nonlinear_dynamics_experiment}. (a) the averaged system state of 50 trajectories with its standard deviation. Red dotted line indicates the boundary of the safe set. Black dotted line indicates the boundary of the nonlinear trap. (b) the empirical safe probability. 
\label{fig:switch_control_nonlinear}}
\end{figure}





\section{Experiments for Reinforcement Learning}
\label{sec:experiment_for_rl}
In this section, we show experiment results of the proposed safety certificate with reinforcement learning.

We consider system~\eqref{eq:x_trajectory} with $f(X_t) \equiv 0$, $g(X_t) \equiv 1$, $\sigma(X_t) \equiv 0.4$. We set $\mathcal{X} = \{0, 1, 2, \cdots 10\}$ and $\mathcal{U} = \{-1,0,1\}$. The discretized system dynamics with $\Delta t=1$ becomes 
\begin{equation}
    X_{t+1} = X_t + U_t + W_t,
\end{equation}
where $X \in \mathcal{X}$ is the state, $U \in \mathcal{U}$ is the control, and $W \in \mathcal{U}$ is the disturbance. 
The softmax policy $\pi$ is parameterized by $\theta \in \mathbb{R}$ as
\begin{equation}
    \pi_\theta(U \mid X) \propto \exp (\theta X U).
\end{equation}
The reward function $r(x,u): \mathbb{R}^2 \rightarrow \mathbb{R}$ is set to be
\begin{equation}
    r(X, U) = (X-5)/10.
\end{equation}
This reward function will encourage visitation of larger $X$.
Throughout this section we consider the long-term avoidance safety specification (type 2), where the safety filter $F$ can be acquired through running the proposed method over the following safest-possible nominal policy $\bar{\pi}$, \ie
\begin{equation}
    \bar{\pi} = \argmax_{\pi} \; \mathbb{P}\left(X_t \in \gC(L_t), \forall t \in \{1,\cdots, H\}\right)
\end{equation}
where the probability is evaluated assuming $U_t\sim\pi(x_t),\forall t\in \{1,2,3,\cdots,H-1\}$.

\subsection{Policy Gradient}
We consider a finite horizon problem with horizon length $H =10$ and initial state $x_0 = 0$. 
We first run policy gradient without safety filter for 1000 iterations, with 10 episodes per iteration. The policy is updated by
\begin{equation}
    \theta_{i+1}=\theta_i+\eta_i \nabla_\theta V^{\pi_{\theta_i}}(x_0),
\end{equation}
where $\eta_i$ is the learning rate and is set to be $\eta_i = \frac{1}{\sqrt{i}}$ with $i$ being the iteration. We run policy gradient for 1500 iterations with 10 episodes per iteration. The results are shown in Fig~\ref{fig:pg} in blue lines. We can see that the average reward keeps increasing during training. The learned policy will drive the system's state to its maximum value to gain better rewards.

We then apply the proposed safety filter to the system. We consider the safe set to be $\mathcal{C}=\{x\in \R: x-5 > 0\}$, \ie we want to limit the state transition to any $x > 5$ for safety concerns. The safety filter corresponding to the safest nominal policy $\bar{\pi}$ is then given by
\begin{equation}
\label{eq:saefty_filter}
    G(X,U) = \begin{cases}
    -1, \quad & X > 5 \\
    U, \quad & \text{otherwise}
    \end{cases}
\end{equation}
Here the safety filter outputs action `$-1$' when $X>5$, otherwise outputs the nominal action $U$.
This safety filter~\eqref{eq:safety_filter} can be acquired by running the proposed safe control strategy with $\bar{\pi}$ being the nominal controller.
We conduct modified policy gradient described in~\eqref{eq:safe_PG_update} and the results are shown in Fig.~\ref{fig:pg} in orange lines. We can see that the learned policy will maximize the return, but will limit the state to safe regions thanks to the safety filter.
\begin{figure}
    \centering
    \includegraphics[width=1\linewidth]{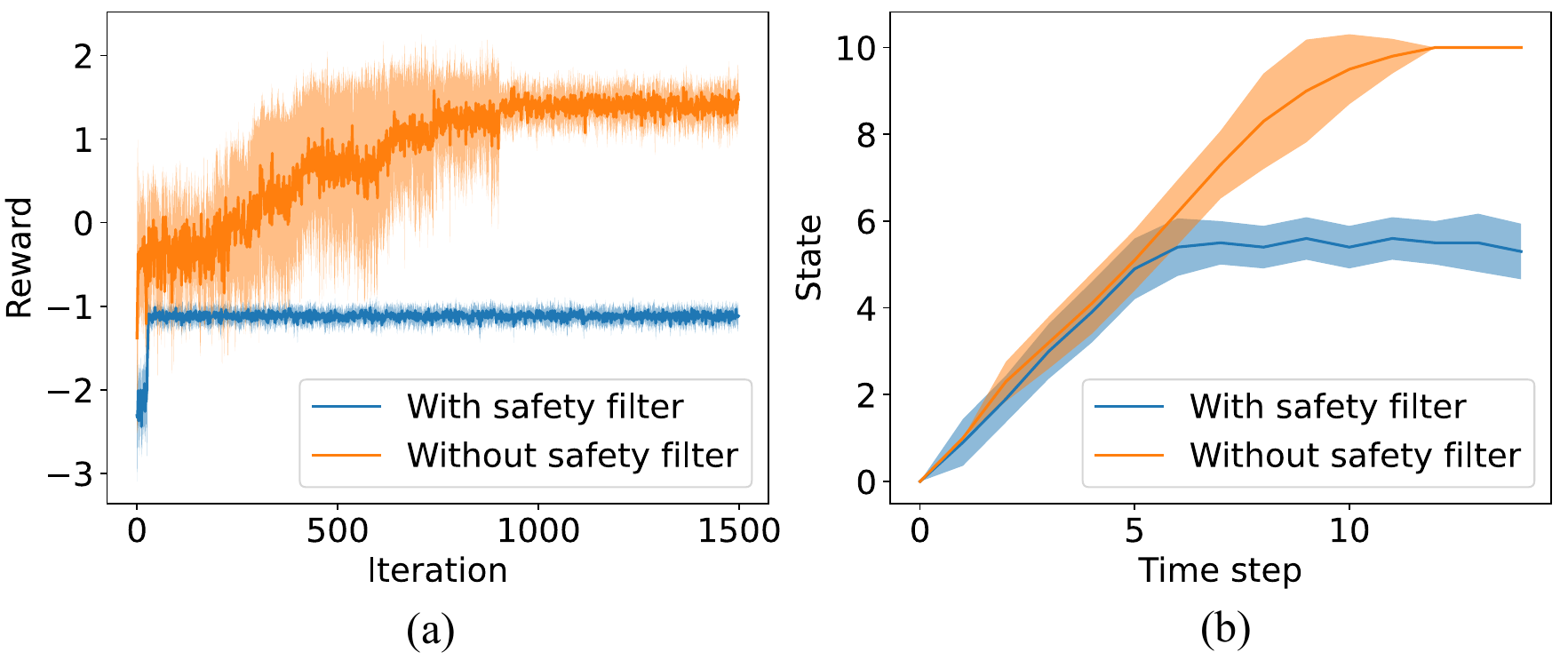}
    \caption{Policy gradient with and without the proposed safety filter. (a) the averaged rewards for 1500 iterations. (b) sample state trajectories with the learned policy. 
    }
    \label{fig:pg}
\end{figure}

\subsection{Q-learning} 
We consider the infinite time horizon problem so that Q function only depends on state and action. We set the discount function to be $\gamma = 0.9$ so that the effective horizon is $10$. We set the initial state as $X_0 = 0$ and number of episodes $N_{\text{eps}} = 10$. Fig.~\ref{fig:Q_safe} shows the change of Q function with and without the safety filter. We can see that the Q function will converge in 1500 iterations, and taking action $u=1$ will give the highest Q function value as expected.
With the safety filter, the Q function converges to lower values, as the safety filter will limit the visitation of those high-rewarding but unsafe regions in the state space.
If we consider greedy policy based on the learned Q function (take action that gives maximized Q function, i.e., $\pi^Q(x,u) = \arg\max_u Q(x, u)$), we should expect similar control behaviors as policy gradient with and without the safety filter.
\begin{figure}
    \centering
    \includegraphics[width=1\linewidth]{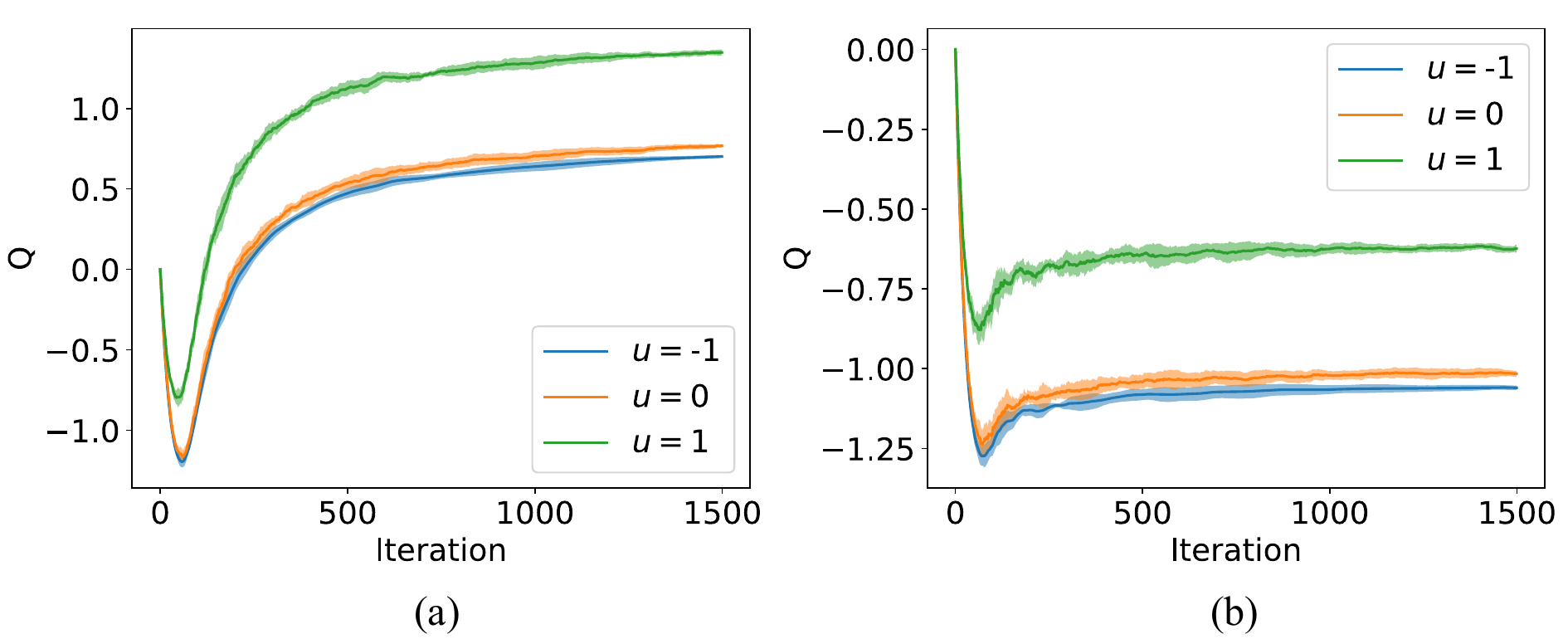}
    \caption{Change of Q function at initial state $x_0=0$ (a) without with the safety filter; (b) with the safety filter.}
    \label{fig:Q_safe}
\end{figure}

\section{Conclusion}
\label{sec:conclusion}
This paper focuses on the problem of ensuring long-term safety with high probability in stochastic systems. 
The major challenge of this problem is the stringent tradeoffs between longer-term safety vs. computational burdens. To mitigate the tradeoffs, we explore how to impose forward invariance on a probability space. Even when set invariance on the state space is satisfied with high probability at each time, long-term safety may not be guaranteed due to the accumulation of uncertainty and risk over time. In contrast, imposing probabilistic invariance on long-term probability allows myopic conditions/controllers to assure long-term design specifications. 
We then integrate this technique into both control and learning methods. The advantages of the proposed control and learning methods are demonstrated using numerical examples. 
Beyond these contributions, probabilistic invariance naturally extends to a broad range of scenarios, including systems with unknown dynamics~\cite{gangadhar2022adaptive}, multi-agent coordination~\cite{jing2022probabilistic}, and latent risks with unobservable state~\cite{gangadhar2023occlusion}. 

Future research directions include generalizing the proposed approach to systems that are not affine control and to new control techniques.  
The requirement for affine control systems is common in many safe control techniques for deterministic systems. For stochastic systems, even for affine control systems, the safety conditions (e.g., the constraints for control barrier function conditions to hold with high probability) may not give rise to linear action constraints. In the proposed approach, the linearity of the safety constraint \eqref{eq:infgen_to_mf} requires affine control dynamics, while Theorem 1 generalizes to non-affine systems. This suggests the relaxation for affine control reduces to developing efficient techniques for evaluating the safety constraints \eqref{eq:infgen_to_mf} for non-affine-control systems. Additionally, the probabilistic invariance concept can be leveraged to tackle other control problems traditionally approached via set invariance, such as constrained optimal control, stability analysis, and robust control~\cite{blanchini1999set}. Since probabilistic invariance shares conceptual parallels with set invariance, it may provide new insights and alternative methodologies for such problems as well.



\bibliographystyle{ieeetr}
\bibliography{citation}

\appendix

\subsection{Safe Control Algorithms for Comparison}
\label{sec:algs_for_comparison}
We compare our proposed controller with three existing safe controllers designed for stochastic systems. Below, we present their techniques in the settings of this paper. We consider long-term safety in~\eqref{eq:long_term_safety} with fixed time horizon and time-invariant zero margin, \ie $\mP \left(\minf_{X_t}(H) \geq 0 \right)$.
\begin{itemize}[leftmargin=*]
    \item Proposed controller: The safety condition is given by
    \begin{align}
        D_\sprob(Z_t, U_t) \geq - \alpha (\sprob(Z_t) - (1-\epsilon)), 
    \end{align}
    where $\alpha > 0$ is set to be a constant, and $\sprob$ is defined based on type 1 in~\eqref{eq:cases_summary}.
    
    \item Stochastic control barrier functions (StoCBF) \cite{clark2019control}: The safety condition is given by 
    \begin{align}
        \label{eq:StoCBF safe constraint}
        D_\phi(X_t, U_t) \geq - \eta \phi(X_t),
    \end{align}
    where $\eta > 0$ is a constant.
    Here, the mapping $D_\phi:\R^{n}\times\R^m \rightarrow \R$ is defined as the infinitesimal generator of the stochastic process $X_t$ acting on the barrier function $\phi$, \ie
    \begin{align}
    \label{eq:infgen_to_phi}
        \begin{split}
            D_\phi (X_t,U_t) := &A\phi (X_t) \\
            = &\gL_{f}\phi(X_t)+\gL_{g}\phi(X_t) U_t \\ 
            &+\frac{1}{2} \text{tr} \left(\left[\sigma(X_t)\right]\left[\sigma(X_t)\right]^\top\Hess \phi(X_t)\right).
        \end{split}
    \end{align} 
    
    This condition constrains the average system state to move within the tangent cone of the safe set. 
    \item Probabilistic safety barrier certificates (PrSBC) \cite{luo2019multi}: The safety condition is given by
    \begin{align}
        \label{eq:PrSBC safe constraint}
        \mathbb{P}\left(D_\phi(X_t, U_t) + \eta \phi(X_t) \geq 0\right) \geq 1 - \epsilon,
    \end{align}
    where  $\eta > 0$ is a constant. This condition constrains the state to stay within the safe set in the infinitesimal future interval with high probability. 
    \item Conditional-value-at-risk barrier functions (CVaR) \cite{ahmadi2020risk}: The safety condition is given by
    \begin{align}
        \label{eq:CVaR safe constraint}
        \text{CVaR}_\epsilon \left(\phi(X_{t_{k+1}})\right) \geq \gamma \phi(X_{t_k})
    \end{align}
    where $\gamma \in (0,1)$ is a constant, $\{t_0 = 0, t_1, t_2,\cdots\}$ is a discrete sampled time of equal sampling intervals. 
    This is a sufficient condition to ensure the value of $\text{CVaR}^{k}_\epsilon(\phi(X_{t_k}))$ conditioned on $X_0 = x$ to be non-negative at all sampled time $t_{k \in \mathbb{Z}_+}$. The value of $\text{CVaR}^{k}_\epsilon(\phi(X_{t_k}))$ quantifies the evaluation made at time $t_0 = 0$ about the safety at time $t_k$.
\end{itemize}

The parameters used are listed in Table \ref{tb:parameter list}. Note that we use the same values for similar parameters in different methods for fair comparisons. 
Detailed discussion on the CVaR parameters can be found in Appendix~\ref{sec:cvar_comparison}.

\begin{table}[h]
\caption{Parameters used in simulation.}
\label{tb:parameter list}
\begin{center}
\begin{tabular}{ |c|c| } 
\hline
\textbf{Controller} & \textbf{Parameters}  \\
\hline
\hline
Proposed controller & $\alpha(x) = \eta x$, $\eta = 1$ $\epsilon = 0.1$, $H=10$ \\
\hline
StoCBF & $\eta = 1$ \\
\hline
PrSBC & $\eta = 1$, $\epsilon = 0.1$ \\
\hline
CVaR & $\gamma = 0.65$, $\epsilon = 0.1$ \\
\hline
\end{tabular}
\end{center}
\end{table}


\subsection{Detailed Comparison with CVaR Baseline}
\label{sec:cvar_comparison}

In this section, we provide a detailed comparison of the CVaR~\cite{ahmadi2020risk} baseline and the proposed method for safe control. Fig.~\ref{fig:cvar_ablation} and Fig.~\ref{fig:proposed_ablation} show the behaviors of CVaR and the proposed method for varying parameters. 
It is worth noting that the guarantee provided by CVaR is for each time and is different from the guarantee of the proposed method. Due to this difference, a valid CVaR can still have a low long-term risk probability.

\begin{figure}
    \centering
    \includegraphics[width=1\linewidth]{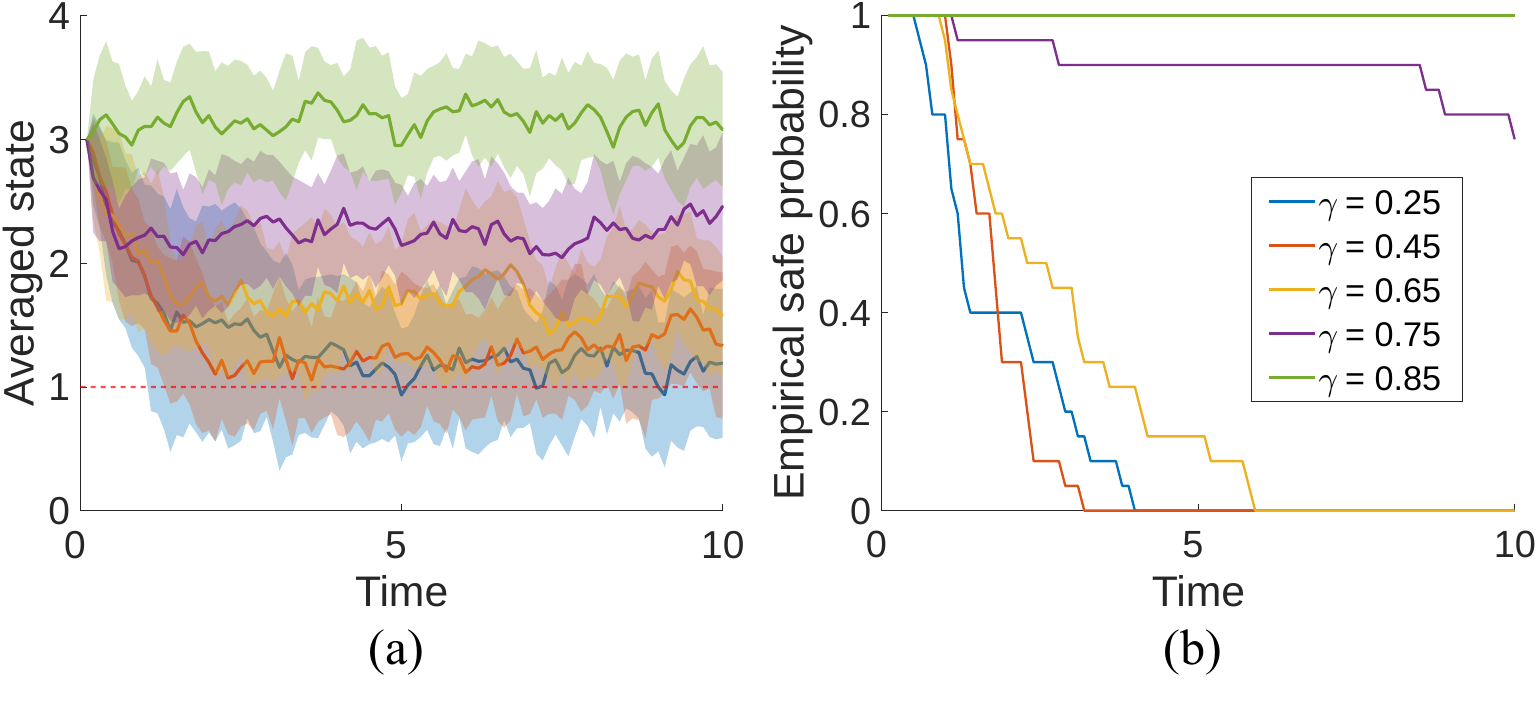}
    \caption{CVaR performance with different choices of $\gamma$. 
    }
    \label{fig:cvar_ablation}
\end{figure}

\begin{figure}
    \centering
    \includegraphics[width=1\linewidth]{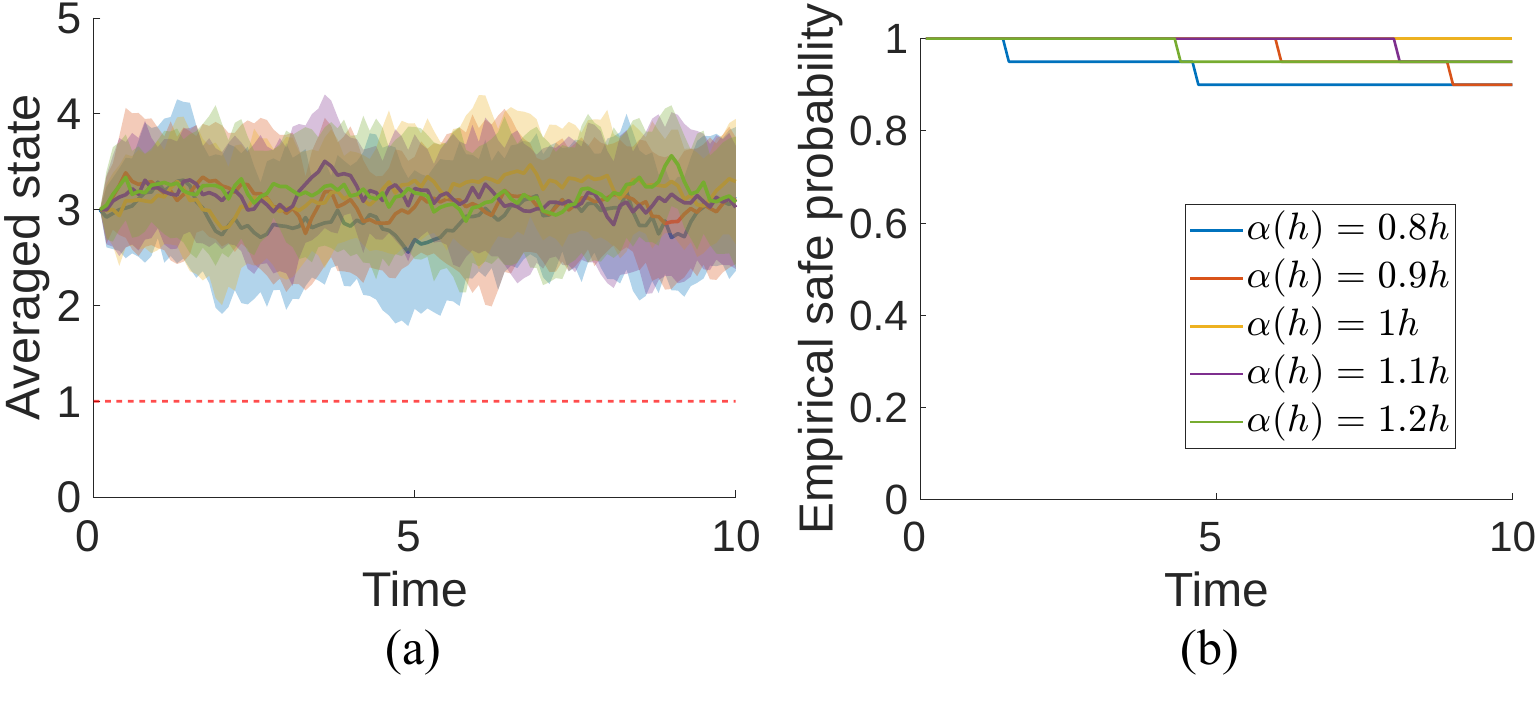}
    \caption{Performance of the proposed method with different choices of $\alpha$. 
    }
    \label{fig:proposed_ablation}
\end{figure}

\subsection{Neural Network-based Nominal Controller}
\label{sec:nn_nominal_exp}
In this section, we show the efficacy of the proposed method for safe control with neural network-based nominal controllers.

We consider the linear control affine system~\eqref{eq:x_trajectory} with $f(X_t) \equiv A = 2$, $g(X_t) \equiv 1$, $\sigma(X_t) \equiv 2$. 
The safe set is defined as in~\eqref{eq:safe_set_experiment}. We use a 2-layer neural network $h^2$ as the nominal controller, with $W_1 = 0.5$, $W_2 = 1$, $b_1 = 0.2$, $b_2 = -0.5$. 
The activation function $\sigma_{\text{act}}$ is chosen to be the ReLU function. The nominal controller has the form
\begin{equation}
\label{eq:NN_nominal}
\begin{aligned}
    N(X) & = \sigma_{\text{act}}\left(W_2 \cdot \sigma_{\text{act}} (W_1 X + b_1) + b_2\right).
\end{aligned}
\end{equation}
In the safety probability estimation and safe control phase, we only have access to the value of $N(X)$ instead of the exact expression~\eqref{eq:NN_nominal}. We show the results for the proposed safe control method in Fig.~\ref{fig:NN_nominal}. Since we can not write down the closed-loop dynamics of the system given the black-boxed neural network controller, none of the existing methods being compared in the previous section can be used. We can see that the nominal NN controller will yield unsafe behaviours, while with our proposed strategy long-term safety is ensured.

\begin{figure}
\centering
\includegraphics[width=1\linewidth]{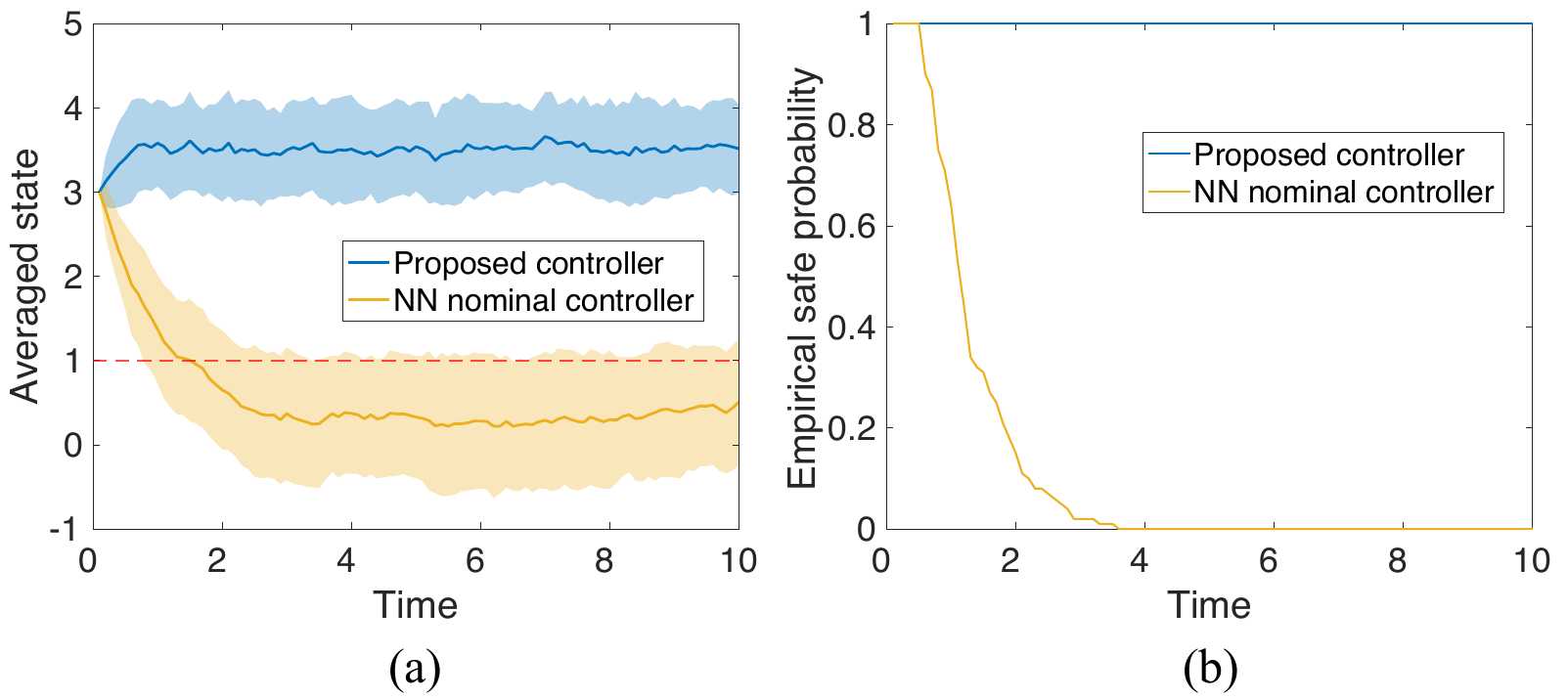}
\caption{Results in the switching control setting with NN nominal controllers. (a) the average system state over 100 trajectories with its standard deviation. Red dotted line indicates the boundary of the safe set. (b) the empirical safe probability. 
\label{fig:NN_nominal}}
\end{figure}

\subsection{Results on 3-Dimensional Systems}
\label{sec:3d_results}
In this section, we show efficacy of the proposed safe control on a 3-dimensional system. We consider dynamics
\begin{equation}
\label{eq:x_trajectory_3d}
\begin{aligned}
    & \left[\frac{dx_1}{dt}, \frac{dx_2}{dt}, \frac{dx_3}{dt}\right] \\
    = \; & [A_1 x_1, A_2 x_2, A_3 x_3] + [u, u, u] + \sigma \; dW_t / dt,
\end{aligned}
\end{equation}
where $[x_1, x_2, x_3] := X \in \mathbb{R}^3$ is the state, $u \in \mathbb{R}$ is the control, and $W_t$ is a 3-dimensional Wiener process with $W_0 = \mathbf{0}$. Here, $[A_1, A_2, A_3] = [-0.5, -0.6, -0.7]$ and we assume $\sigma = 1$ for simplicity. 
The nominal control is chosen to be $u_{\text{nominal}} = 0$.
The safe set is defined as $\mathcal{C} = \{X: \min(x_1, x_2, x_3) \geq 1\}$. We run our proposed safe control method with risk tolerance $\epsilon = 0.1$ for $T = 10 \mathrm{s}$ with $dt = 0.05$. The initial state is set to be $X_0 = [5, 5, 5]$.
The results are shown in Fig.~\ref{fig:3d_results}, where we can see on the left that the proposed safe control will maintain the system state within the safe enough region, and on the right that the average system state yields high expected long-term safety probability. The empirical safety probability is 1 at all time steps.

\begin{figure}
\centering
\includegraphics[width=1\linewidth]{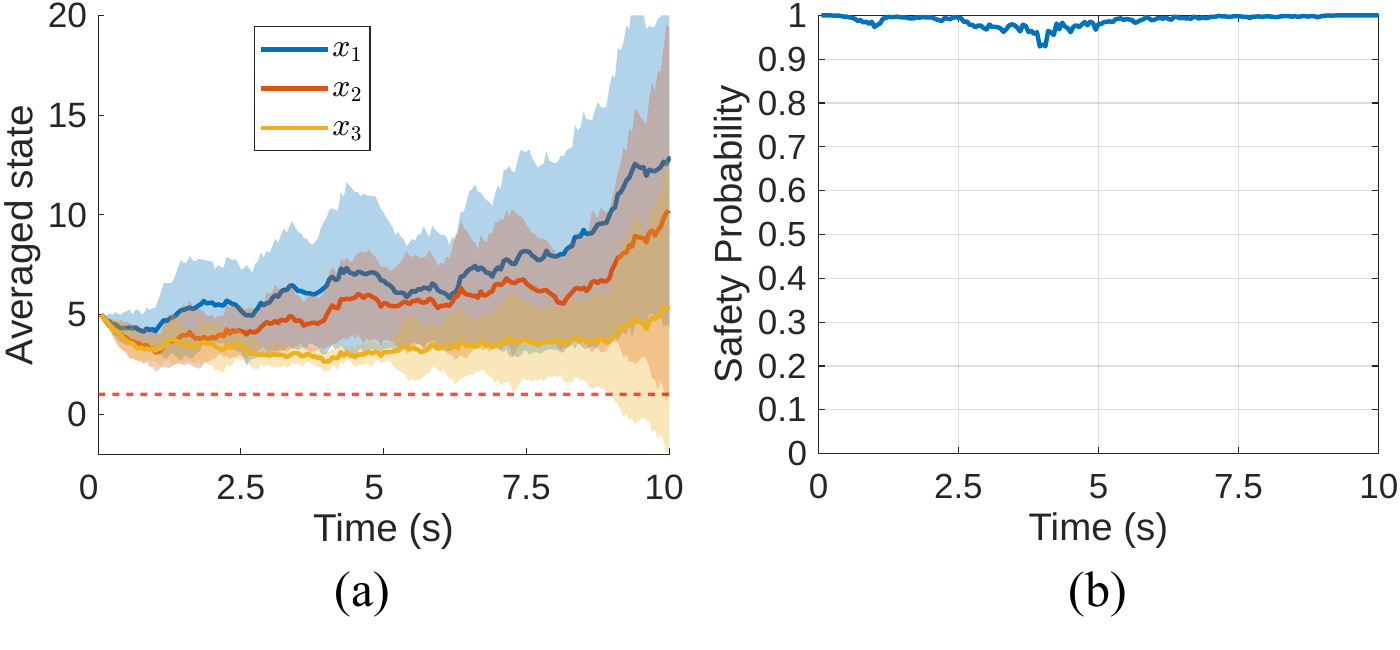}
\caption{Results on 3-dimensional systems. (a) the average system state over 10 trajectories with its standard deviation. Red dotted line indicates the boundary of the safe set. (b) the expected long-term safe probability. 
\label{fig:3d_results}}
\end{figure}


\begin{IEEEbiography}[{\includegraphics[width=1in,height=1.25in,clip,keepaspectratio]{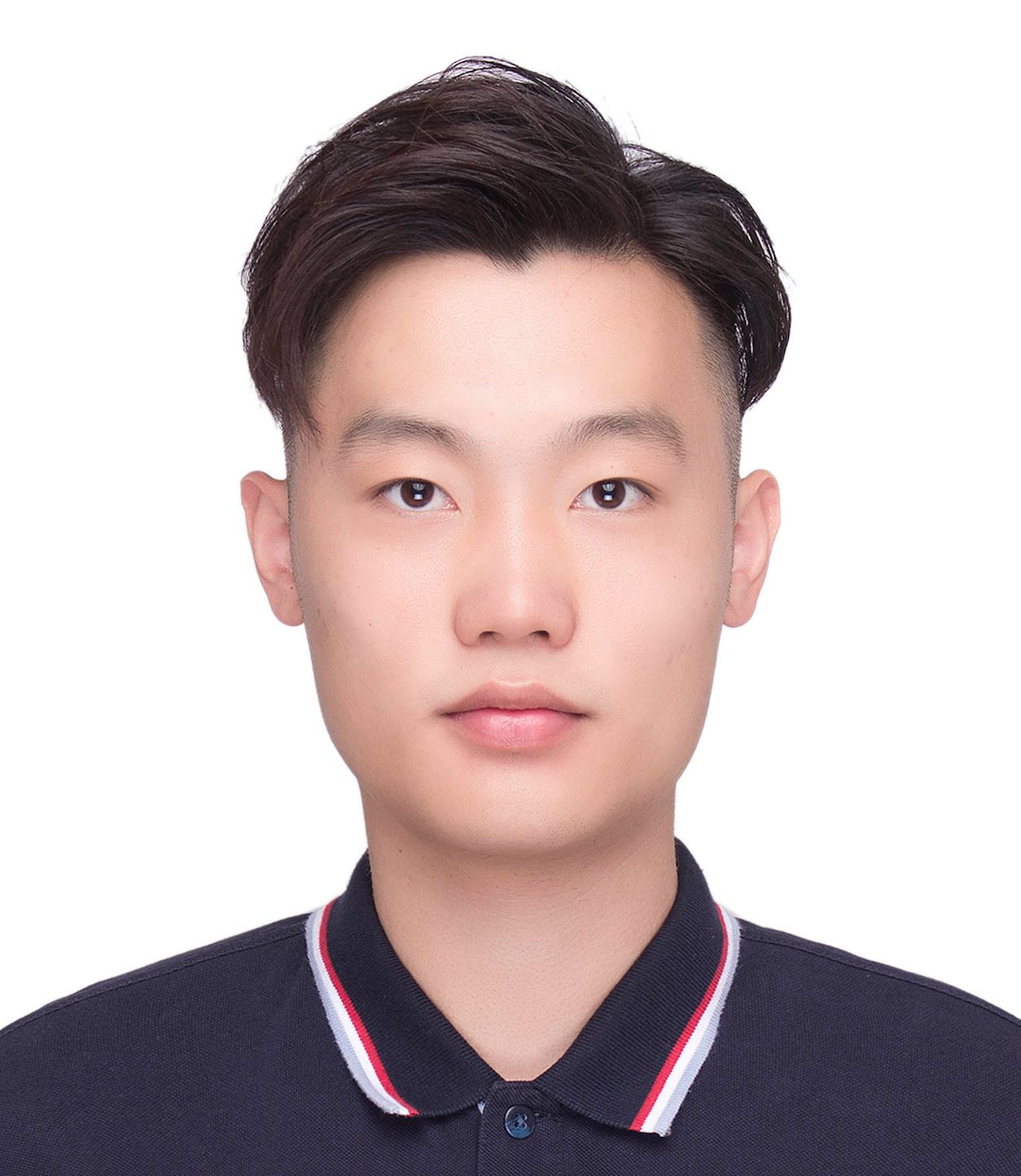}}]{Zhuoyuan Wang} received his B.E. degree in Automation from Tsinghua University, Beijing, China, in 2020 and is currently pursuing a Ph.D. degree in 
Electrical and Computer Engineering at Carnegie Mellon University, Pittsburgh, PA, USA.

His research interests include safety-critical control for stochastic systems, physics-informed learning, safe reinforcement learning and application to robotic systems.
He is a recipient of the Michel and Kathy Doreau Graduate Fellowship at Carnegie Mellon University.
\end{IEEEbiography}

\begin{IEEEbiography}[{\includegraphics[width=1in,height=1.25in,clip,keepaspectratio]{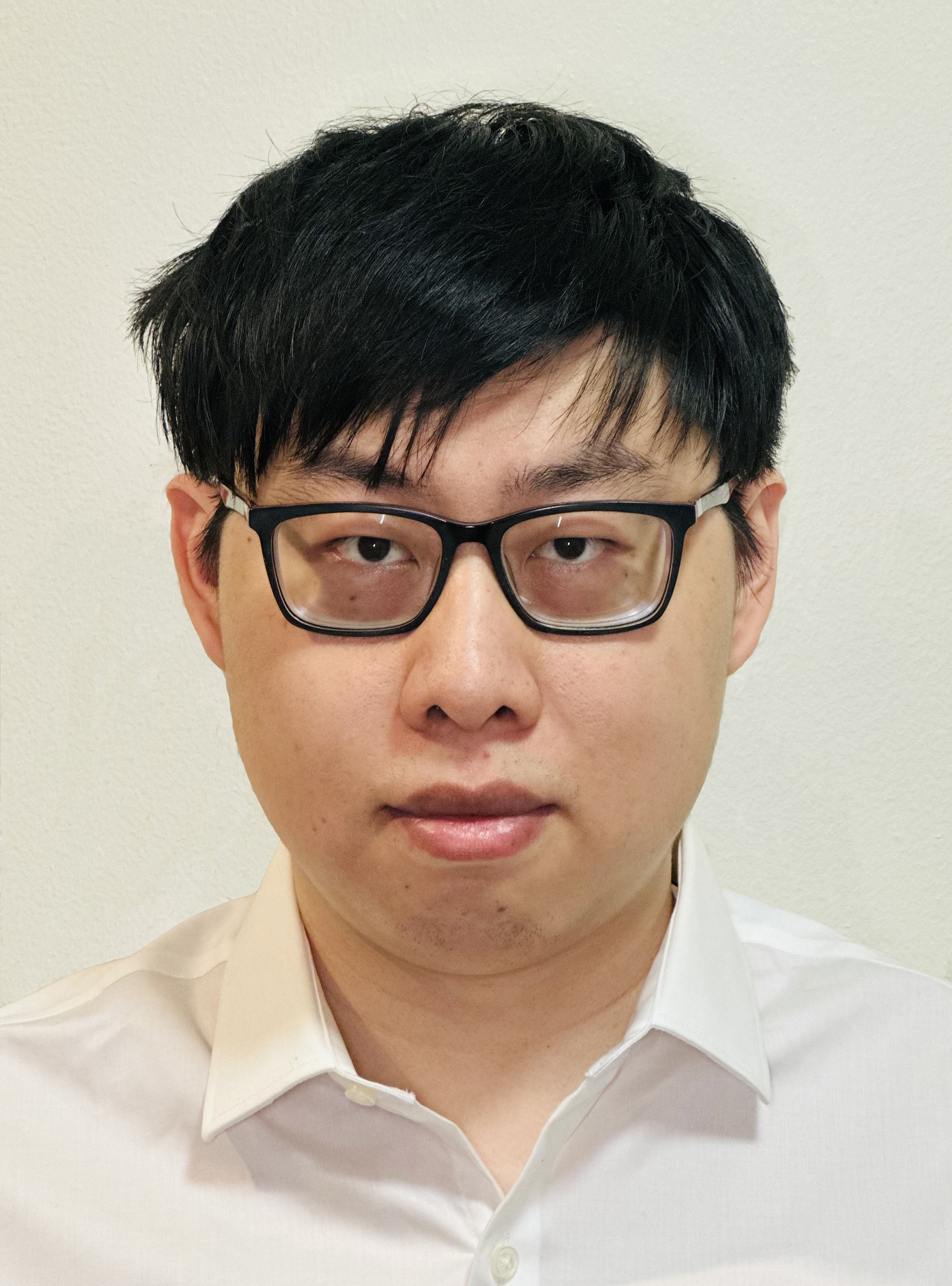}}]{Haoming Jing} received his B.S. degree in Electrical Engineering from University of California, Santa Barbara, Santa Barbara, CA, USA in 2020 and is currently pursuing a Ph.D. degree in 
Electrical and Computer Engineering at Carnegie Mellon University, Pittsburgh, PA, USA.

His research interests include safety-critical control for multi-agent stochastic systems, safety for autonomous driving systems, and safe human-machine interactions.
\end{IEEEbiography}

\begin{IEEEbiography}[{\includegraphics[width=1in,clip,keepaspectratio]{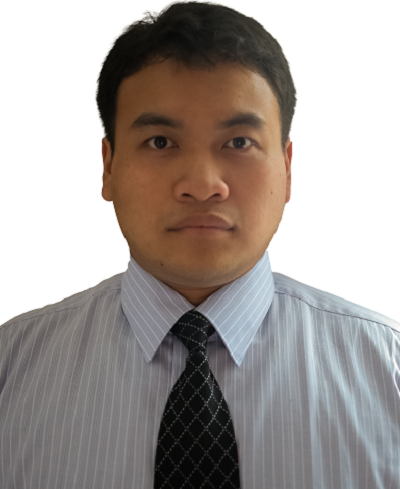}}]{Christian Kurniawan}
received a B.Sc. degree in Mathematics from Brigham Young University, Hawaii campus in 2014 and an M.S. degree in Mechanical Engineering with an emphasis on computational grain boundary from Brigham Young University's main campus in 2018. His current research interest includes applications of optimization techniques and inverse problem theory in various computational engineering problems and autonomous systems.
\end{IEEEbiography}

\begin{IEEEbiography}[{\includegraphics[width=1in,height=1.25in,clip,keepaspectratio]{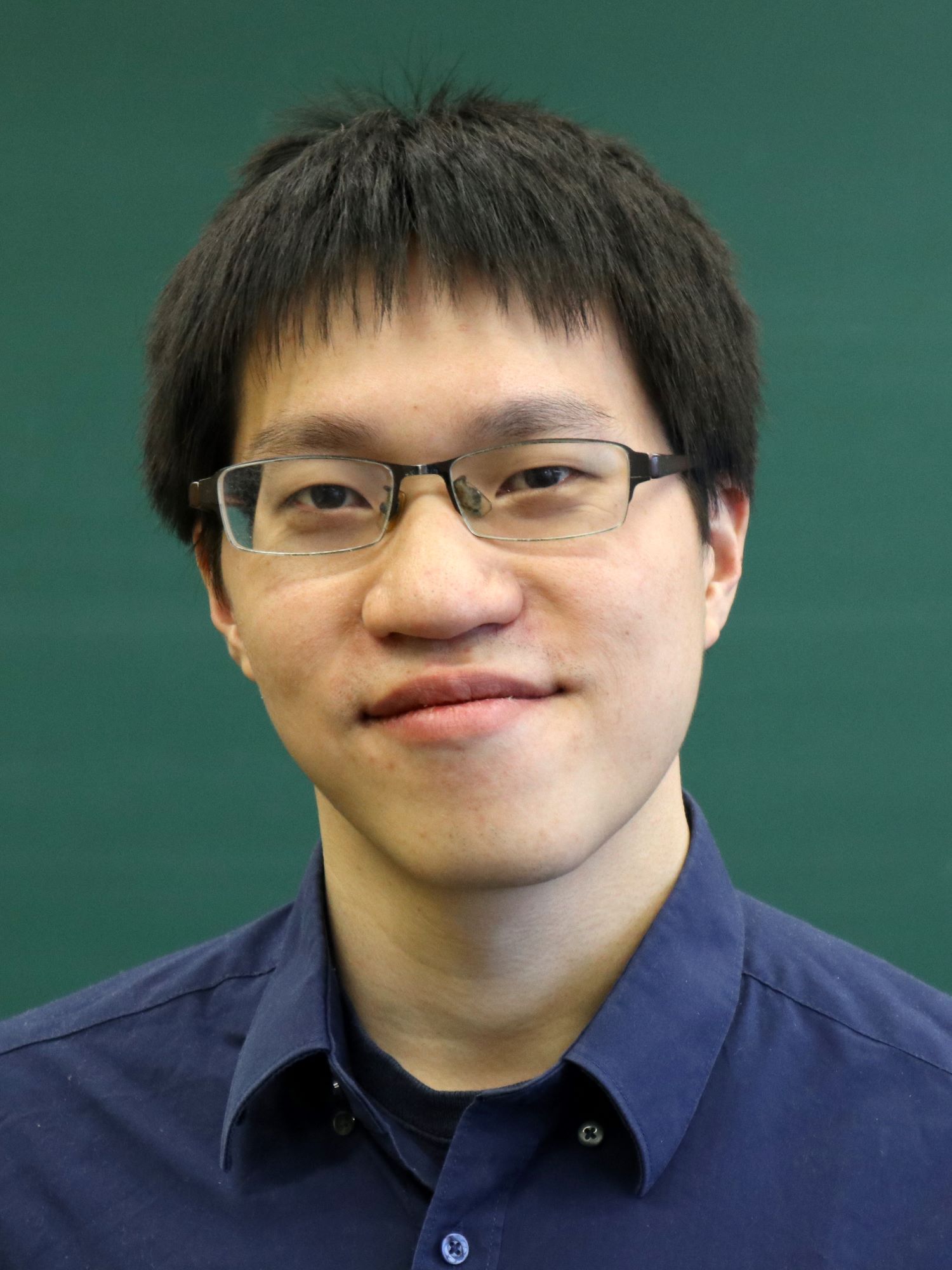}}]{Albert Chern} is an Assistant Professor in Computer Science and Engineering at University of California San Diego, La Jolla, CA, USA, since 2020.  He received his Ph.D. in Applied and Computational Mathematics at California Institute of Technology, Pasadena, CA, USA, in 2017, and was a Postdoctoral Researcher in Mathematics at Technische Universit{\"a}t Berlin, Berlin, Germany, from 2017 to 2020.  Chern's research interest lies in applications of differential geometry to computational math, fluid dynamics, computer graphics, and the interplay between stochastic processes and differential equations.  Chern was a recipient of the NSF CAREER Award in 2023.
\end{IEEEbiography}

\begin{IEEEbiography}[{\includegraphics[width=1in,height=1.25in,clip,keepaspectratio]{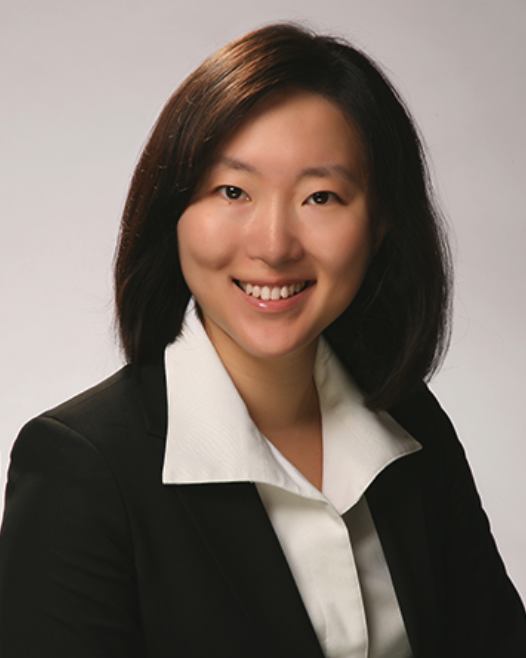}}]{Yorie Nakahira} is an Assistant Professor in the Department of Electrical and Computer Engineering at Carnegie Mellon University. She received B.E. in Control and Systems Engineering from Tokyo Institute of Technology in 2012 and Ph.D. in Control and Dynamical Systems from California Institute of Technology in 2019. Her research interests include the fundamental theory of optimization, control, and learning and its application to neuroscience, cell biology, smart grid, cloud computing, finance, autonomous robots.
\end{IEEEbiography}

\end{document}